\newcommand{\relvar}[2]{\buildrel {#2} \over {#1}}
\newcommand{\eqvar}[1]{\relvar{=}{\mathrm{#1}}}
\newcommand{\levar}[1]{\relvar{\le}{\mathrm{#1}}}
\newcommand{\eqdef}{\eqvar{\scriptscriptstyle\triangle}}
\newcommand{\eqref}[1]{(\ref{#1})}
\newtheorem{theorem}{Theorem}[section]
\newtheorem{proposition}{Proposition}[section]
\newtheorem{corollary}{Corollary}[section]
\begin{document}

\title{Constructing Linear Codes with Good Joint Spectra}
\author{\authorblockN{Shengtian Yang, Yan Chen, Thomas Honold, Zhaoyang Zhang, and Peiliang Qiu}
\authorblockA{Department of Information Science \& Electronic Engineering\\
Zhejiang University\\
Hangzhou, Zhejiang 310027, China\\
\{yangshengtian, qiupl418, honold, ning\_ming, qiupl\}@zju.edu.cn}}

\maketitle

\footnotetext[1]{This work was supported by Zhejiang Provincial Natural Science Foundation of China (No. Y106068), by the National Natural Science Foundation of China (No. 60602023, 60772093), and by the National High Technology Research and Development Program of China (No. 2006AA01Z273, 2007AA01Z257).}

\begin{abstract}
The problem of finding good linear codes for joint source-channel coding (JSCC) is investigated in this paper. By the code-spectrum approach, it has been proved in the authors' previous paper that a good linear code for the authors' JSCC scheme is a code with a good joint spectrum, so the main task in this paper is to construct linear codes with good joint spectra. First, the code-spectrum approach is developed further to facilitate the calculation of spectra. Second, some general principles for constructing good linear codes are presented. Finally, we propose an explicit construction of linear codes with good joint spectra based on low density parity check (LDPC) codes and low density generator matrix (LDGM) codes.
\end{abstract}

\section{Introduction}

A lot of research on practical designs of lossless joint source-channel coding (JSCC) based on linear codes have been done for specific correlated sources and multiple-access channels (MACs), e.g., correlated sources over separated noisy channels (e.g., \cite{MT:Zhao200611a}), correlated sources over additive white Gaussian noise (AWGN) MACs (e.g., \cite{JSCC:Garcia200709}), correlated sources over Rayleigh fading MACs (e.g., \cite{MT:Zhao200611b}). However, for transmission of arbitrary correlated sources over arbitrary MACs, it is still not clear how to construct an optimal lossless JSCC scheme. In \cite{JSCC:Yang200712}, we proposed a lossless JSCC scheme based on linear codes for MACs, which proved to be optimal if good linear codes and good conditional probabilities are chosen. Figure \ref{fig:Scheme1} illustrates the mechanism of our scheme in detail. Using the code-spectrum approach established in \cite{JSCC:Yang200712}, we found that a good linear code for our JSCC scheme is a code with a good joint spectrum. Hence, to design a lossless JSCC scheme in practice, a big problem is how to construct linear codes with good joint spectra. In this paper, we will investigate the problem in depth and give an explicit construction of linear codes with good joint spectra based on sparse matrices.

\begin{figure*}[htbp]
\centering
\includegraphics{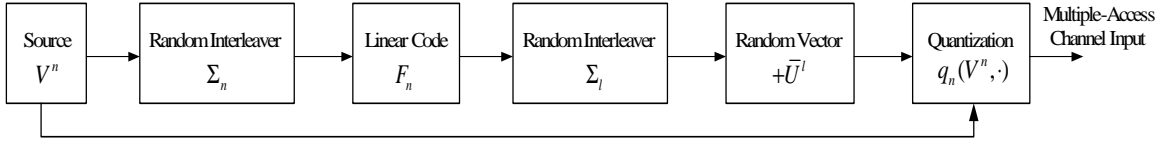}
\caption{The proposed linear codes based lossless joint source-channel encoding scheme of each encoder for multiple-access channels}
\label{fig:Scheme1}
\end{figure*}

In the sequel, symbols, real variables and deterministic mappings are denoted by lowercase letters. Sets and random elements are denoted by capital letters, and the empty set is denoted by $\emptyset$. Alphabets are denoted by script capital letters. All logarithms are taken to the natural base $\mathrm{e}$ and denoted by $\ln$. The composition of the functions $f$ and $g$ is denoted by $f \circ g$, where $(f \circ g)(x) \eqdef f(g(x))$. The indicator function is denoted by $1\{\cdot\}$. The cardinality of a set $A$ is denoted by $|A|$. For any random elements $F$ and $G$ in a common measurable space, the equality $F \eqvar{d} G$ means that $F$ and $G$ have the same probability distribution.

\section{Basics of the Code-Spectrum Approach}\label{sec:CodeSpectrumApproach}

Before investigating the problem of constructing good linear codes, we first need to briefly introduce our ``code-spectrum'' approach established in \cite{JSCC:Yang200712}, which may be regarded as a generalization of the weight-distribution approach (e.g., \cite{JSCC:Divsalar199809}).

Let $\mathcal{X}$ and $\mathcal{Y}$ be two finite (additive) abelian groups. We define a \emph{linear code} as a homomorphism $f: \mathcal{X}^n \to \mathcal{Y}^m$, i.e., a map satisfying
$$
f(x_1^n + x_2^n) = f(x_1^n) + f(x_2^n) \quad \forall x_1^n, x_2^n \in \mathcal{X}^n
$$
where $\mathcal{X}^n$ and $\mathcal{Y}^m$ denote the $n$-fold direct product of $\mathcal{X}$ and the $m$-fold direct product of $\mathcal{Y}$, respectively, and $x^n$ denotes any sequence $x_1x_2\cdots x_n$ in $\mathcal{X}^n$. We also define the rate of a linear code $f$ to be the ratio $n / m$, and denote it by $R(f)$.

Note that any permutation (or interleaver) $\sigma_n$ on $n$ letters can be regarded as an automorphism on $\mathcal{X}^n$. We denote by $\Sigma_n$ a uniformly distributed random permutation on $n$ letters. We tacitly assume that different random permutations occurring in the same expression are independent.

Next, we introduce the concept of types \cite{JSCC:Csiszar198100}. The \emph{type} of a sequence $x^n$ in $\mathcal{X}^n$ is the empirical distribution $P_{x^n}$ on $\mathcal{X}$ defined by
$$
P_{x^n}(a) \eqdef \frac{1}{n} \sum_{i=1}^n 1\{x_i = a\} \quad \forall a \in \mathcal{X}
$$
For a (probability) distribution $P$ on $\mathcal{X}$, the set of sequences of type $P$ in $\mathcal{X}^n$ is denoted by $\mathcal{T}_P^n(\mathcal{X})$. A distribution $P$ on $\mathcal{X}$ is called a type of sequences in $\mathcal{X}^n$ if $\mathcal{T}_P^n(\mathcal{X}) \ne \emptyset$. We denote by $\mathcal{P}(\mathcal{X})$ the set of all distributions on $\mathcal{X}$, and denote by $\mathcal{P}_n(\mathcal{X})$ the set of all possible types of sequences in $\mathcal{X}^n$.

Now, we introduce the \emph{spectrum}, the most important concept in the code-spectrum approach. The spectrum of a nonempty set $A \subseteq \mathcal{X}^n$ is the empirical distribution $S_{\mathcal{X}}(A)$ on $\mathcal{P}(\mathcal{X})$ defined by
$$
S_{\mathcal{X}}(A)(P) \eqdef \frac{|\{x^n \in A | P_{x^n} = P\}|}{|A|} \quad \forall P \in \mathcal{P}(\mathcal{X}).
$$
Analogously, the \emph{joint spectrum} of a nonempty set $B \subseteq \mathcal{X}^n \times \mathcal{Y}^m$ is the empirical distribution $S_{\mathcal{X}\mathcal{Y}}(B)$ on $\mathcal{P}(\mathcal{X}) \times \mathcal{P}(\mathcal{Y})$ defined by
$$
S_{\mathcal{X}\mathcal{Y}}(B)(P, Q) \eqdef \frac{|\{(x^n, y^m) \in B | P_{x^n} = P, P_{y^m} = Q\}|}{|B|}
$$
for all $P \in \mathcal{P}(\mathcal{X}), Q \in \mathcal{P}(\mathcal{Y})$. Furthermore, we define the \emph{marginal spectra} $S_{\mathcal{X}}(B)$, $S_{\mathcal{Y}}(B)$ as the marginal distributions of $S_{\mathcal{X}\mathcal{Y}}(B)$, that is,
$$
S_{\mathcal{X}}(B)(P) \eqdef \sum_{Q \in \mathcal{P}(\mathcal{Y})} S_{\mathcal{X}\mathcal{Y}}(B)(P, Q)
$$
$$
S_{\mathcal{Y}}(B)(Q) \eqdef \sum_{P \in \mathcal{P}(\mathcal{X})} S_{\mathcal{X}\mathcal{Y}}(B)(P, Q).
$$
Please note that the summation in the definition of $S_{\mathcal{X}}(B)(P)$ (or $S_{\mathcal{Y}}(B)(Q)$) is taken over an infinite set $\mathcal{P}(\mathcal{Y})$ (or $\mathcal{P}(\mathcal{X})$), but is actually over a finite set because
$$
S_{\mathcal{X}\mathcal{Y}}(B)(P, Q) = 0
$$
for any $(P, Q)$ satisfying $P \in \mathcal{P}(\mathcal{X}) \backslash \mathcal{P}_n(\mathcal{X})$ or $Q \in \mathcal{P}(\mathcal{Y}) \backslash \mathcal{P}_m(\mathcal{Y})$. We define the \emph{conditional spectra} $S_{\mathcal{Y}|\mathcal{X}}(B)$, $S_{\mathcal{X}|\mathcal{Y}}(B)$ as the conditional distributions of $S_{\mathcal{X}\mathcal{Y}}(B)$, that is,
\begin{IEEEeqnarray*}{l}
S_{\mathcal{Y}|\mathcal{X}}(B)(Q|P) \eqdef \frac{S_{\mathcal{X}\mathcal{Y}}(B)(P, Q)}{S_{\mathcal{X}}(B)(P)} \\
\hspace{10em} \forall P \mbox{ satisfying } S_{\mathcal{X}}(B)(P) \ne 0
\end{IEEEeqnarray*}
\begin{IEEEeqnarray*}{l}
S_{\mathcal{X}|\mathcal{Y}}(B)(P|Q) \eqdef \frac{S_{\mathcal{X}\mathcal{Y}}(B)(P, Q)}{S_{\mathcal{Y}}(B)(Q)} \\
\hspace{10em} \forall Q \mbox{ satisfying } S_{\mathcal{Y}}(B)(Q) \ne 0.
\end{IEEEeqnarray*}

Then naturally, for any given function $f: \mathcal{X}^n \to \mathcal{Y}^m$, we can define its \emph{joint spectrum} $S_{\mathcal{X}\mathcal{Y}}(f)$, \emph{forward conditional spectrum} $S_{\mathcal{Y}|\mathcal{X}}(f)$, \emph{backward conditional spectrum} $S_{\mathcal{X}|\mathcal{Y}}(f)$, and \emph{image spectrum} $S_{\mathcal{Y}}(f)$ as $S_{\mathcal{X}\mathcal{Y}}(\mathrm{rl}(f))$, $S_{\mathcal{Y}|\mathcal{X}}(\mathrm{rl}(f))$, $S_{\mathcal{X}|\mathcal{Y}}(\mathrm{rl}(f))$, and $S_{\mathcal{Y}}(\mathrm{rl}(f))$, respectively, where $\mathrm{rl}(f)$ is the \emph{relation} defined by $\{(x^n, f(x^n)) | x^n \in \mathcal{X}^n\}$. In this case, the forward conditional spectrum is given by
$$
S_{\mathcal{Y}|\mathcal{X}}(f)(Q|P) = \frac{S_{\mathcal{X}\mathcal{Y}}(f)(P, Q)}{S_{\mathcal{X}}(\mathcal{X}^n)(P)}.
$$
If $f$ is a linear code, we further define its \emph{kernel spectrum} as $S_{\mathcal{X}}(\ker f)$, where $\ker\!f \eqdef \{x^n | f(x^n) = 0^m\}$. In this case, we have
$$
S_{\mathcal{Y}}(f) = S_{\mathcal{Y}}(f(\mathcal{X}^n))
$$
since $f$ is a homomorphism according to the definition of linear codes.

The above definitions can be easily extended to more general cases. For example, we may consider the joint spectrum $S_{\mathcal{X}\mathcal{Y}\mathcal{Z}}(C)$ of a set $C \subseteq \mathcal{X}^n \times \mathcal{Y}^m \times \mathcal{Z}^l$, or consider the joint spectrum $S_{\mathcal{X}_1\mathcal{X}_2\mathcal{Y}_1\mathcal{Y}_2}(g)$ of a function $g: \mathcal{X}_1^{n_1} \times \mathcal{X}_2^{n_2} \to \mathcal{Y}_1^{m_1} \times \mathcal{Y}_2^{m_2}$.

A series of properties regarding the spectrum of codes were proved in \cite{JSCC:Yang200712}. Readers may refer to \cite{JSCC:Yang200712} for the details. Some results are listed below for easy reference.

\begin{proposition}\label{pr:SpectrumOfSets}
For all $P \in \mathcal{P}_n(\mathcal{X})$ and $P_i \in \mathcal{P}_{n_i}(\mathcal{X}_i)$ ($1 \le i \le m$), we have
$$
S_{\mathcal{X}}(\mathcal{X}^{n})(P) = \frac{{n \choose nP}}{|\mathcal{X}|^n},
$$
$$
S_{\mathcal{X}_1 \cdots \mathcal{X}_m}(\prod_{i=1}^m A_i)(P_1, \cdots, P_m) = \prod_{i=1}^m S_{\mathcal{X}_i}(A_i)(P_i),
$$
where
$$
{n \choose nP} \eqdef \frac{n!}{\prod_{a \in \mathcal{X}} (nP(a))!}
$$
and $A_i \subseteq \mathcal{X}_i^{n_i}$ ($1 \le i \le m$).
\end{proposition}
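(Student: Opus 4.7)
The plan is to verify both identities by unpacking definitions; no nontrivial machinery is needed.

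For the first identity, I would argue directly from the definition: $S_{\mathcal{X}}(\mathcal{X}^n)(P)$ equals $|\mathcal{T}_P^n(\mathcal{X})|/|\mathcal{X}|^n$, so the task reduces to counting sequences $x^n \in \mathcal{X}^n$ in which each symbol $a \in \mathcal{X}$ appears exactly $nP(a)$ times. Since $P \in \mathcal{P}_n(\mathcal{X})$, each $nP(a)$ is a nonnegative integer and the factorials in the multinomial coefficient $\binom{n}{nP}$ are well-defined. A standard multinomial counting argument then gives $|\mathcal{T}_P^n(\mathcal{X})| = \binom{n}{nP}$, which yields the claim after dividing by $|\mathcal{X}|^n$.

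For the product formula, the key observation is that the type $P_{x_i^{n_i}}$ depends only on the $i$-th block of a tuple in $\prod_i A_i$, so the joint type constraint decouples. Explicitly, I would write
$$
\{(x_1^{n_1},\ldots,x_m^{n_m}) \in \prod_{i=1}^m A_i : P_{x_i^{n_i}} = P_i \text{ for all } i\} = \prod_{i=1}^m \{x_i^{n_i} \in A_i : P_{x_i^{n_i}} = P_i\},
$$
and apply the multiplicativity of cardinality on Cartesian products to both the numerator and the denominator $|\prod_i A_i| = \prod_i |A_i|$ in the definition of $S_{\mathcal{X}_1\cdots\mathcal{X}_m}(\prod_i A_i)(P_1,\ldots,P_m)$. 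The factorization then separates into $\prod_{i=1}^m S_{\mathcal{X}_i}(A_i)(P_i)$, as required.

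The main obstacle is essentially nonexistent: this proposition is a pure bookkeeping exercise providing the most basic closed-form spectra, which will be reused throughout the paper. The only minor points worth flagging are that the hypothesis $P \in \mathcal{P}_n(\mathcal{X})$ (resp.\ $P_i \in \mathcal{P}_{n_i}(\mathcal{X}_i)$) is needed to guarantee integrality of the entries of $nP$, and that each $A_i$ must be tacitly nonempty so that the denominator in the definition of the joint spectrum is well-defined.
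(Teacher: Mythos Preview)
Your proposal is correct. The paper itself does not prove this proposition: it is listed among results quoted from the authors' earlier work \cite{JSCC:Yang200712} with proofs deferred there, so there is no in-paper argument to compare against. Your direct verification from the definitions---multinomial counting for $|\mathcal{T}_P^n(\mathcal{X})|$ and Cartesian decoupling for the product formula---is exactly the natural (and essentially only) way to establish these identities.
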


\begin{proposition}\label{pr:SpectrumPropertyOfFunctions}
For any given random function $F: \mathcal{X}^n \to \mathcal{Y}^m$, we have
\begin{equation}\label{eq:SpectrumPropertyOfFunctions}
\Pr\{\tilde{F}(x^n) = y^m\} = |\mathcal{Y}|^{-m} \alpha(F)(P_{x^n}, P_{y^m})
\end{equation}
for any $x^n \in \mathcal{X}^n$, $y^m \in \mathcal{Y}^m$, where
\begin{equation}\label{eq:RandomizedF1}
\tilde{F} \eqdef \Sigma_m \circ F \circ \Sigma_n
\end{equation}
and
\begin{IEEEeqnarray}{rCl}
\alpha(F)(P, Q) &\eqdef &\frac{E[S_{\mathcal{X}\mathcal{Y}}(F)(P, Q)]}{S_{\mathcal{X}\mathcal{Y}}(\mathcal{X}^n \times \mathcal{Y}^m)(P, Q)} \IEEEnonumber \\
&= &\frac{E[S_{\mathcal{Y}|\mathcal{X}}(F)(Q|P)]}{S_{\mathcal{Y}}(\mathcal{Y}^m)(Q)}. \label{eq:DefinitionOfAlpha}
\end{IEEEeqnarray}
\end{proposition}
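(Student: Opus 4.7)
The plan is to condition on $F$, average over the two independent random permutations $\Sigma_n$ and $\Sigma_m$, recognize the resulting count as a joint-spectrum value, and match normalization constants via Proposition~\ref{pr:SpectrumOfSets}.

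First I would use two elementary permutation facts. Writing $P = P_{x^n}$ and $Q = P_{y^m}$: since any two sequences of the same type differ by a permutation, $\Sigma_n(x^n)$ is uniformly distributed on $\mathcal{T}_P^n(\mathcal{X})$; and independently, for any $y'' \in \mathcal{Y}^m$, $\Pr\{\Sigma_m(y'') = y^m\}$ equals $1/|\mathcal{T}_Q^m(\mathcal{Y})|$ when $y'' \in \mathcal{T}_Q^m(\mathcal{Y})$ and is zero otherwise. Conditioning on $F$ and composing these two facts gives
$$
\Pr\{\tilde{F}(x^n) = y^m \mid F\} = \frac{|\{x' \in \mathcal{T}_P^n(\mathcal{X}) : F(x') \in \mathcal{T}_Q^m(\mathcal{Y})\}|}{|\mathcal{T}_P^n(\mathcal{X})|\,|\mathcal{T}_Q^m(\mathcal{Y})|}.
$$

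Next I would recognize the numerator as $|\mathcal{X}|^n \cdot S_{\mathcal{X}\mathcal{Y}}(F)(P,Q)$ directly from the definitions of $\mathrm{rl}(F)$ and of the joint spectrum, using $|\mathrm{rl}(F)| = |\mathcal{X}|^n$ because $F$ is a function. Taking expectation over $F$ then yields
$$
\Pr\{\tilde{F}(x^n) = y^m\} = \frac{|\mathcal{X}|^n\, E[S_{\mathcal{X}\mathcal{Y}}(F)(P,Q)]}{|\mathcal{T}_P^n(\mathcal{X})|\,|\mathcal{T}_Q^m(\mathcal{Y})|}.
$$
Applying Proposition~\ref{pr:SpectrumOfSets} with $|\mathcal{T}_P^n(\mathcal{X})| = {n \choose nP}$, $|\mathcal{T}_Q^m(\mathcal{Y})| = {m \choose mQ}$, and $S_{\mathcal{X}\mathcal{Y}}(\mathcal{X}^n \times \mathcal{Y}^m)(P,Q) = S_{\mathcal{X}}(\mathcal{X}^n)(P)\, S_{\mathcal{Y}}(\mathcal{Y}^m)(Q)$ reduces the combinatorial prefactor to $|\mathcal{Y}|^{-m}/S_{\mathcal{X}\mathcal{Y}}(\mathcal{X}^n \times \mathcal{Y}^m)(P,Q)$, proving \eqref{eq:SpectrumPropertyOfFunctions} with the first form of $\alpha(F)(P,Q)$. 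The second equality in \eqref{eq:DefinitionOfAlpha} then follows by dividing numerator and denominator by $S_{\mathcal{X}}(\mathcal{X}^n)(P)$ and invoking the definition of $S_{\mathcal{Y}|\mathcal{X}}$.

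The only genuine obstacle is bookkeeping with the spectrum formalism: one must distinguish the random function $F$ from its deterministic realization when conditioning, and carefully invoke $\mathrm{rl}(F)$, which has cardinality $|\mathcal{X}|^n$, in the joint-spectrum definition. Beyond that, the argument reduces to the elementary orbit-counting sketched above, and no further ingredient is needed.
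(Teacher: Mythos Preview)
Your argument is correct and is exactly the natural orbit-counting computation one expects: uniformize the input over its type class via $\Sigma_n$, uniformize the output over its type class via $\Sigma_m$, recognize the resulting count as $|\mathcal{X}|^n S_{\mathcal{X}\mathcal{Y}}(F)(P,Q)$, and then use Proposition~\ref{pr:SpectrumOfSets} to convert the type-class sizes into the product $S_{\mathcal{X}}(\mathcal{X}^n)(P)\,S_{\mathcal{Y}}(\mathcal{Y}^m)(Q)$. The second equality in \eqref{eq:DefinitionOfAlpha} is then immediate from the identity $S_{\mathcal{X}\mathcal{Y}}(F)(P,Q)=S_{\mathcal{X}}(\mathcal{X}^n)(P)\,S_{\mathcal{Y}|\mathcal{X}}(F)(Q|P)$.

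Note that the present paper does not actually prove this proposition: it is quoted from \cite{JSCC:Yang200712} and listed ``for easy reference,'' so there is no in-paper proof to compare against. Your write-up would serve well as the omitted argument; the only cosmetic point is that in step one you might make explicit that $\Sigma_n(x^n)$ being uniform on $\mathcal{T}_P^n(\mathcal{X})$ uses the transitivity of the symmetric group on each type class, which is implicit in your phrasing but worth a half-sentence.
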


\begin{proposition}\label{pr:GoodLinearCodes}
For any given linear code $f: \mathcal{X}^n \to \mathcal{Y}^m$, we have
\begin{equation}\label{eq:Identity2OfLC}
\alpha(f)(P_{0^n}, P_{y^m}) = |\mathcal{Y}|^{m} 1\{y^m = 0^m\}.
\end{equation}
If both $\mathcal{X}$ and $\mathcal{Y}$ are the Galois field $\mathbb{F}_q$, we define a particular random linear code $F_{q, n, m}^{\mathrm{RLC}}: \mathbb{F}_q^n \to \mathbb{F}_q^m$ by $x^n \mapsto A_{m \times n} \cdot x^n$, where $x^n$ represents an $n$-dimensional column vector, and $A_{m \times n}$ denotes a random matrix with $m$ rows and $n$ columns, each entry independently taking values in $\mathbb{F}_q$ according to a uniform distribution on $\mathbb{F}_q$. (Note that for each realization of $A_{m \times n}$, we then obtain a corresponding realization of $F_{q, n, m}^{\mathrm{RLC}}$. Such a random construction has already been adopted in \cite[Section 2.1]{JSCC:Gallager196300}, \cite{JSCC:Csiszar198207}, etc.) Then we have
\begin{equation}\label{eq:Identity1OfGLC}
\Pr\{\tilde{F}_{q, n, m}^{\mathrm{RLC}}(x^n) = y^m\} = \Pr\{F_{q, n, m}^{\mathrm{RLC}}(x^n) = y^m\} = q^{-m}
\end{equation}
for all $x^n \in \mathbb{F}_q^n \backslash \{0^n\}$ and $y^m \in \mathbb{F}_q^m$, or equivalently
\begin{equation}\label{eq:Identity2OfGLC}
\alpha(F_{q, n, m}^{\mathrm{RLC}})(P, Q) = 1
\end{equation}
for all $P \in \mathcal{P}_n(\mathbb{F}_q) \backslash \{P_{0^n}\}$ and $Q \in \mathcal{P}_m(\mathbb{F}_q)$.
\end{proposition}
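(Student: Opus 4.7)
My plan is to handle the two assertions separately, using Proposition \ref{pr:SpectrumPropertyOfFunctions} as the bridge between pointwise transition probabilities of the randomized code $\tilde{F}$ and the quantity $\alpha(F)$.

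For \eqref{eq:Identity2OfLC}, I would specialize Proposition \ref{pr:SpectrumPropertyOfFunctions} at $x^n = 0^n$. Because $f$ is a homomorphism, $f(0^n) = 0^m$; and because $0^n$ is fixed by every permutation, $\tilde{f}(0^n) = \Sigma_m(f(\Sigma_n(0^n))) = \Sigma_m(0^m) = 0^m$ deterministically. Therefore the left-hand side of \eqref{eq:SpectrumPropertyOfFunctions} equals $1\{y^m = 0^m\}$, and since $P_{x^n} = P_{0^n}$, solving for $\alpha(f)(P_{0^n}, P_{y^m})$ gives the claimed formula $|\mathcal{Y}|^m 1\{y^m = 0^m\}$ (note that $P_{y^m}=P_{0^m}$ forces $y^m=0^m$).

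For \eqref{eq:Identity1OfGLC}, the core computation is to show that if $x^n\neq 0^n$ then $A_{m\times n}\cdot x^n$ is uniformly distributed on $\mathbb{F}_q^m$. Since the rows of $A_{m\times n}$ are independent, it suffices to treat a single row: fix any index $j$ with $x_j\neq 0$ and condition on the remaining entries of that row; the remaining term $A_{ij}x_j$ is a uniform random element of $\mathbb{F}_q$ (multiplication by a nonzero field element is a bijection), so the inner product is uniform on $\mathbb{F}_q$. Independence across rows then gives the uniform distribution on $\mathbb{F}_q^m$, establishing the second equality in \eqref{eq:Identity1OfGLC}. For the first equality, note that $\Sigma_n(x^n)\neq 0^n$ whenever $x^n\neq 0^n$, so conditioning on $\Sigma_n$ the output $F_{q,n,m}^{\mathrm{RLC}}(\Sigma_n(x^n))$ is uniform on $\mathbb{F}_q^m$ by the same argument; since the uniform distribution is invariant under the bijection $\Sigma_m$, the distribution of $\tilde{F}_{q,n,m}^{\mathrm{RLC}}(x^n)$ is also uniform. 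Finally, \eqref{eq:Identity2OfGLC} follows by substituting $\Pr\{\tilde F_{q,n,m}^{\mathrm{RLC}}(x^n)=y^m\}=q^{-m}$ into \eqref{eq:SpectrumPropertyOfFunctions} and observing that ranging $x^n$ over $\mathbb{F}_q^n\setminus\{0^n\}$ and $y^m$ over $\mathbb{F}_q^m$ sweeps out exactly the pairs $(P,Q)$ with $P\in\mathcal{P}_n(\mathbb{F}_q)\setminus\{P_{0^n}\}$ and $Q\in\mathcal{P}_m(\mathbb{F}_q)$.

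There is no serious obstacle; the argument is essentially a conditioning computation. The only step requiring care is keeping the permutation composition $\tilde F = \Sigma_m\circ F\circ \Sigma_n$ straight and verifying that the nonzero-input condition is preserved through $\Sigma_n$, which is what allows the uniform-output argument to carry over from $F$ to $\tilde F$.
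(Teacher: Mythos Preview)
Your argument is correct. Note that the paper itself does not prove this proposition; it is quoted from the authors' earlier work \cite{JSCC:Yang200712} and listed ``for easy reference'' only, so there is no in-paper proof to compare against. Your route---specializing Proposition~\ref{pr:SpectrumPropertyOfFunctions} at $x^n=0^n$ for \eqref{eq:Identity2OfLC}, and the standard conditioning argument on a nonzero coordinate of $x^n$ for \eqref{eq:Identity1OfGLC}, then reading \eqref{eq:Identity2OfGLC} back from \eqref{eq:SpectrumPropertyOfFunctions}---is the natural one and is presumably what the original reference does.
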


\section{Some New Results about Code Spectra}\label{sec:NewResults}

In order to evaluate the performance of a linear code, we need to calculate or estimate its spectrum. However, the results established in \cite{JSCC:Yang200712} are still not enough for this purpose. So in this section, we will present some new results to facilitate the calculation of spectra. All the proofs are easy and hence omitted here.

First, we proved the following two propositions, which imply that any codes (or functions) may be regarded as conditional probability distributions. Such a viewpoint is very helpful when calculating the spectrum of a complex code consisting of many simple codes.

\begin{proposition}\label{pr:SpectrumPropertyXOfFunctions}
For any random function $F: \mathcal{X}^n \to \mathcal{Y}^m$ and any $x^n \in \mathcal{X}^n$, we have
\begin{equation}
\Pr\{(F \circ \Sigma_n)(x^n) \in \mathcal{T}_Q^m(\mathcal{Y})\} = E[S_{\mathcal{Y}|\mathcal{X}}(F)(Q|P_{x^n})].
\end{equation}
\end{proposition}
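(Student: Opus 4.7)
The plan is to establish the identity first for a deterministic $F = f$ and then recover the random case by averaging, which is legitimate since the standing convention makes $F$ and $\Sigma_n$ independent. For deterministic $f$, the left-hand side reduces to a counting question about how the random coordinate permutation $\Sigma_n$ acts on $x^n$.

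The crucial observation is that $\Sigma_n(x^n)$ is uniformly distributed on the type class $\mathcal{T}_{P_{x^n}}^n(\mathcal{X})$. This is a standard orbit-stabilizer computation: the stabilizer of $x^n$ under the action of the symmetric group on coordinates has order $\prod_{a \in \mathcal{X}} (nP_{x^n}(a))!$, the same for every orbit representative, so each $\tilde{x}^n \in \mathcal{T}_{P_{x^n}}^n(\mathcal{X})$ is hit with probability $1/|\mathcal{T}_{P_{x^n}}^n(\mathcal{X})|$. Hence
$$
\Pr\{f(\Sigma_n(x^n)) \in \mathcal{T}_Q^m(\mathcal{Y})\} = \frac{|\{\tilde{x}^n \in \mathcal{T}_{P_{x^n}}^n(\mathcal{X}) : f(\tilde{x}^n) \in \mathcal{T}_Q^m(\mathcal{Y})\}|}{|\mathcal{T}_{P_{x^n}}^n(\mathcal{X})|}.
$$

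Next I would unwind the definitions. The numerator above equals $|\mathcal{X}|^n \cdot S_{\mathcal{X}\mathcal{Y}}(f)(P_{x^n}, Q)$ by the definition of the joint spectrum applied to $\mathrm{rl}(f)$, and the denominator equals $|\mathcal{X}|^n \cdot S_{\mathcal{X}}(f)(P_{x^n})$, because $S_{\mathcal{X}}(f)(P) = S_{\mathcal{X}}(\mathcal{X}^n)(P) = |\mathcal{T}_P^n(\mathcal{X})|/|\mathcal{X}|^n$ for any function $f$ (each $x^n$ contributes exactly one element to $\mathrm{rl}(f)$, so the marginal on $\mathcal{X}^n$ is independent of $f$, as can also be read off from Proposition \ref{pr:SpectrumOfSets}). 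The ratio is therefore exactly $S_{\mathcal{Y}|\mathcal{X}}(f)(Q|P_{x^n})$, which settles the deterministic case. Taking expectation over the randomness of $F$ on both sides completes the proof.

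No serious obstacle arises here; the argument is essentially a definition chase. The only subtle ingredient is identifying $\Sigma_n(x^n)$ as uniform on the type class via the orbit-stabilizer observation, after which the conclusion follows directly from the definitions of joint, marginal, and conditional spectra introduced in Section \ref{sec:CodeSpectrumApproach}.
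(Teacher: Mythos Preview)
Your proof is correct. The paper itself omits the proof of this proposition, remarking only that ``all the proofs are easy and hence omitted here,'' and your argument is precisely the natural definition chase one would expect: identify $\Sigma_n(x^n)$ as uniform on the type class $\mathcal{T}_{P_{x^n}}^n(\mathcal{X})$, rewrite the resulting counting ratio as $S_{\mathcal{X}\mathcal{Y}}(f)(P_{x^n},Q)/S_{\mathcal{X}}(\mathcal{X}^n)(P_{x^n}) = S_{\mathcal{Y}|\mathcal{X}}(f)(Q|P_{x^n})$, and then average over the independent randomness in $F$.
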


\begin{proposition}\label{pr:SpectrumOfSeriallyConcatenatedFunctions}
For any two random functions $F: \mathcal{X}^n \to \mathcal{Y}^m$ and $G: \mathcal{Y}^m \to \mathcal{Z}^l$, and any $O \in \mathcal{P}_n(\mathcal{X})$, $Q \in \mathcal{P}_l(\mathcal{Z})$, we have
\begin{IEEEeqnarray*}{Cl}
&E[S_{\mathcal{Z}|\mathcal{X}}(G \circ \Sigma_m \circ F)(Q|O)] \\
= &\sum_{P \in \mathcal{P}_m(\mathcal{Y})} E[S_{\mathcal{Y}|\mathcal{X}}(F)(P|O)] E[S_{\mathcal{Z}|\mathcal{Y}}(G)(Q|P)].
\end{IEEEeqnarray*}
\end{proposition}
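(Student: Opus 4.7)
The plan is to unfold the expected conditional spectrum on the left side into an average of pointwise probabilities, then condition on the intermediate output of $F$ and group by its type, and finally apply Proposition~\ref{pr:SpectrumPropertyXOfFunctions} to rewrite the $G\circ\Sigma_m$ piece.

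First I would write the conditional spectrum in its pointwise form. For any (deterministic) function $H: \mathcal{X}^n \to \mathcal{Z}^l$,
$$
S_{\mathcal{Z}|\mathcal{X}}(H)(Q|O) = \frac{1}{|\mathcal{T}_O^n(\mathcal{X})|} \sum_{x^n \in \mathcal{T}_O^n(\mathcal{X})} 1\{H(x^n) \in \mathcal{T}_Q^l(\mathcal{Z})\},
$$
so applying this to the random function $H = G \circ \Sigma_m \circ F$ and interchanging expectation with the finite sum gives
$$
E[S_{\mathcal{Z}|\mathcal{X}}(G \circ \Sigma_m \circ F)(Q|O)] = \frac{1}{|\mathcal{T}_O^n(\mathcal{X})|} \sum_{x^n \in \mathcal{T}_O^n(\mathcal{X})} \Pr\{(G \circ \Sigma_m \circ F)(x^n) \in \mathcal{T}_Q^l(\mathcal{Z})\}.
$$

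Next, for each fixed $x^n$, I would condition on the value of $F(x^n)$ and exploit the (tacit) mutual independence of $F$, $\Sigma_m$, $G$:
$$
\Pr\{(G \circ \Sigma_m \circ F)(x^n) \in \mathcal{T}_Q^l(\mathcal{Z})\} = \sum_{y^m \in \mathcal{Y}^m} \Pr\{F(x^n) = y^m\}\, \Pr\{G(\Sigma_m(y^m)) \in \mathcal{T}_Q^l(\mathcal{Z})\}.
$$
By Proposition~\ref{pr:SpectrumPropertyXOfFunctions} applied to $G$, the second factor equals $E[S_{\mathcal{Z}|\mathcal{Y}}(G)(Q|P_{y^m})]$, which depends on $y^m$ only through its type. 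Grouping the sum over $y^m$ by type $P \in \mathcal{P}_m(\mathcal{Y})$ yields
$$
\Pr\{(G \circ \Sigma_m \circ F)(x^n) \in \mathcal{T}_Q^l(\mathcal{Z})\} = \sum_{P \in \mathcal{P}_m(\mathcal{Y})} \Pr\{F(x^n) \in \mathcal{T}_P^m(\mathcal{Y})\}\, E[S_{\mathcal{Z}|\mathcal{Y}}(G)(Q|P)].
$$

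Finally, averaging over $x^n \in \mathcal{T}_O^n(\mathcal{X})$ and swapping the two finite sums, I would recognize
$$
\frac{1}{|\mathcal{T}_O^n(\mathcal{X})|} \sum_{x^n \in \mathcal{T}_O^n(\mathcal{X})} \Pr\{F(x^n) \in \mathcal{T}_P^m(\mathcal{Y})\} = E[S_{\mathcal{Y}|\mathcal{X}}(F)(P|O)]
$$
(the same identity used in the first step, applied to $F$), which immediately gives the claimed equality. The only real care-point, and the spot where one could easily slip up, is the independence bookkeeping: one must confirm that conditioning on $F(x^n) = y^m$ leaves the distribution of $\Sigma_m$ (and hence of $G \circ \Sigma_m$) unchanged, so that $\Sigma_m(y^m)$ is uniform on $\mathcal{T}_{P_{y^m}}^m(\mathcal{Y})$ and Proposition~\ref{pr:SpectrumPropertyXOfFunctions} is legitimately applicable with the fixed argument $y^m$. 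Once that is in place the argument is purely a chain of definitional rewrites, consistent with the author's remark that the proof is easy.
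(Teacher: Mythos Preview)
Your argument is correct and is exactly the natural unfolding one would expect: expand the conditional spectrum as a type-class average, condition on $F(x^n)$ using the independence of $F$, $\Sigma_m$, and $G$, invoke Proposition~\ref{pr:SpectrumPropertyXOfFunctions} to convert the $G\circ\Sigma_m$ term into $E[S_{\mathcal{Z}|\mathcal{Y}}(G)(Q|P_{y^m})]$, group by type, and re-identify the average over $\mathcal{T}_O^n(\mathcal{X})$ as $E[S_{\mathcal{Y}|\mathcal{X}}(F)(P|O)]$. The paper itself omits the proof entirely (stating only that it is easy), so there is no alternative approach to compare against; your write-up is a clean justification of the claimed identity, and your flagged care-point about the independence of $\Sigma_m$ from $F$ is precisely the one nontrivial verification needed.
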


Second, let us develop a generating function method for the calculations of spectra.

For any set $A \subseteq \mathcal{X}^n$, we define the \emph{generating function} $\mathcal{G}(A)$ of its spectrum to be
$$
\mathcal{G}(A)(u) \eqdef \sum_{P \in \mathcal{P}_n(\mathcal{X})} S_{\mathcal{X}}(A)(P)(u^{nP})_{\otimes}
$$
where $u$ is a map from $\mathcal{X}$ to $\mathbb{C}$ (the set of complex numbers) or $u \in \mathbb{C}^{\mathcal{X}}$, and for any $u, v \in \mathbb{C}^\mathcal{X}$, we define
$$
(ru)(a) \eqdef ru(a) \quad \forall r \in \mathbb{C}, a \in \mathcal{X},
$$
$$
(u^{v})(a) \eqdef u(a)^{v(a)} \quad \forall a \in \mathcal{X},
$$
$$
(u)_{\otimes} \eqdef \prod_{a \in \mathcal{X}} u(a).
$$
Also note that $P \in \mathcal{P}_n(\mathcal{X}) \subseteq \mathbb{C}^{\mathcal{X}}$. Analogously, for any set $B \subseteq \mathcal{X}^n \times \mathcal{Y}^m$, we define the generating function of its joint spectrum as
\begin{IEEEeqnarray*}{l}
\mathcal{G}(B)(u, v) \eqdef \\
\quad \sum_{P \in \mathcal{P}_n(\mathcal{X}), Q \in \mathcal{P}_m(\mathcal{Y})} S_{\mathcal{X}\mathcal{Y}}(B)(P, Q) (u^{nP})_{\otimes} (v^{mQ})_{\otimes},
\end{IEEEeqnarray*}
where $u \in \mathbb{C}^{\mathcal{X}}$, $v \in \mathbb{C}^{\mathcal{Y}}$. This in particular defines $\mathcal{G}(f) \eqdef \mathcal{G}(\mathrm{rl}(f))$ for any function $f: \mathcal{X}^n \to \mathcal{Y}^m$.

Based on the above definitions, we proved the following properties.

\begin{proposition}\label{pr:GeneratingFunctionProperty}
For any two sets $A_1 \subseteq \mathcal{X}^{n_1}$ and $A_2 \subseteq \mathcal{X}^{n_2}$, we have
\begin{equation}\label{eq:GeneratingFunctionProperty2A}
\mathcal{G}(A_1 \times A_2)(u) = \mathcal{G}(A_1)(u) \cdot \mathcal{G}(A_2)(u).
\end{equation}
For any two sets $A_1 \subseteq \mathcal{X}^{n}$ and $A_2 \subseteq \mathcal{Y}^{m}$, we have
\begin{equation}\label{eq:GeneratingFunctionProperty2B}
\mathcal{G}(A_1 \times A_2)(u, v) = \mathcal{G}(A_1)(u) \cdot \mathcal{G}(A_2)(v).
\end{equation}
For any two sets $B_1 \subseteq \mathcal{X}^{n_1} \times \mathcal{Y}^{m_1}$ and $B_2 \subseteq \mathcal{X}^{n_2} \times \mathcal{Y}^{m_2}$, we have
\begin{equation}\label{eq:GeneratingFunctionProperty2C}
\mathcal{G}(B_1 \times B_2)(u, v) = \mathcal{G}(B_1)(u, v) \cdot \mathcal{G}(B_2)(u, v).
\end{equation}
\end{proposition}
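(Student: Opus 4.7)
The plan is to reduce all three identities to the single observation that the generating function admits a sum-over-sequences representation, after which each identity becomes a routine factoring of a product sum. Specifically, I would first prove the key lemma
\[
\mathcal{G}(A)(u) = \frac{1}{|A|} \sum_{x^n \in A} \prod_{i=1}^{n} u(x_i),
\qquad A \subseteq \mathcal{X}^n,
\]
and its joint analogue
\[
\mathcal{G}(B)(u,v) = \frac{1}{|B|} \sum_{(x^n,y^m) \in B} \prod_{i=1}^{n} u(x_i) \prod_{j=1}^{m} v(y_j),
\qquad B \subseteq \mathcal{X}^n \times \mathcal{Y}^m.
\]
Both follow by writing $(u^{nP_{x^n}})_\otimes = \prod_{a \in \mathcal{X}} u(a)^{nP_{x^n}(a)} = \prod_{i=1}^{n} u(x_i)$ (and analogously for $v^{mP_{y^m}}$), substituting the definitions of $S_{\mathcal{X}}(A)$ and $S_{\mathcal{X}\mathcal{Y}}(B)$, and swapping the order of summation so that the outer sum over types becomes a sum over sequences.

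Given this representation, the three identities reduce to the observation that $|A_1 \times A_2| = |A_1|\cdot|A_2|$ and that the product-set indexing decouples cleanly. For \eqref{eq:GeneratingFunctionProperty2A}, I would identify $(x_1^{n_1}, x_2^{n_2}) \in A_1 \times A_2$ with a concatenated sequence in $\mathcal{X}^{n_1+n_2}$, so
\[
\prod_{i=1}^{n_1+n_2} u(\text{concat}_i) \;=\; \prod_{i=1}^{n_1} u((x_1)_i) \cdot \prod_{j=1}^{n_2} u((x_2)_j),
\]
and the double sum factors as a product of two single sums, yielding $\mathcal{G}(A_1)(u)\cdot\mathcal{G}(A_2)(u)$. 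Identities \eqref{eq:GeneratingFunctionProperty2B} and \eqref{eq:GeneratingFunctionProperty2C} follow by exactly the same factoring, with the only bookkeeping being that in \eqref{eq:GeneratingFunctionProperty2B} the two factors use different arguments $u$ and $v$ (because the components live in different alphabets $\mathcal{X}$ and $\mathcal{Y}$), and in \eqref{eq:GeneratingFunctionProperty2C} each factor is already of the joint form and both factors use the same pair $(u,v)$.

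There is essentially no obstacle, since no analytic step or properties of types beyond the definitions are needed; the proof is pure bookkeeping once the sum-over-sequences representation is in hand. The only mild subtlety worth flagging is the harmless identification of $\mathcal{X}^{n_1} \times \mathcal{X}^{n_2}$ with $\mathcal{X}^{n_1+n_2}$ in \eqref{eq:GeneratingFunctionProperty2A}, which is why the right-hand side involves a product of generating functions in the \emph{same} variable $u$, in contrast with \eqref{eq:GeneratingFunctionProperty2B}. This is consistent with the authors' remark that all proofs in this section are easy and omitted.
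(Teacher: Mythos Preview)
Your proposal is correct; the sum-over-sequences representation you isolate is exactly the right tool, and once it is in hand the three factorizations are immediate. The paper itself omits the proof entirely (stating at the start of Section~\ref{sec:NewResults} that ``all the proofs are easy and hence omitted here''), so there is no alternative argument to compare against; your write-up is a faithful filling-in of what the authors left implicit.
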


\begin{corollary}
$$
\mathcal{G}(\mathcal{X}^n)(u) = \biggl( \frac{(u)_{\oplus}}{|\mathcal{X}|} \biggr)^n,
$$
where
$$
(u)_{\oplus} \eqdef \sum_{a \in \mathcal{X}} u(a).
$$
\end{corollary}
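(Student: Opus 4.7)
The plan is to prove the identity directly from the definition of $\mathcal{G}$, using Proposition \ref{pr:SpectrumOfSets} and the multinomial theorem. Expanding $\mathcal{G}(\mathcal{X}^n)(u) = \sum_{P \in \mathcal{P}_n(\mathcal{X})} S_{\mathcal{X}}(\mathcal{X}^n)(P) (u^{nP})_{\otimes}$ and substituting $S_{\mathcal{X}}(\mathcal{X}^n)(P) = \binom{n}{nP}/|\mathcal{X}|^n$ from Proposition \ref{pr:SpectrumOfSets}, I would need to show that
$$
\sum_{P \in \mathcal{P}_n(\mathcal{X})} \binom{n}{nP} (u^{nP})_{\otimes} = (u)_{\oplus}^n.
$$
This is exactly the multinomial expansion of $(\sum_{a \in \mathcal{X}} u(a))^n$: the sum ranges over all tuples of nonnegative integers $(k_a)_{a \in \mathcal{X}}$ with $\sum_a k_a = n$, each contributing $\binom{n}{(k_a)} \prod_a u(a)^{k_a}$, and such tuples are in bijection with types $P \in \mathcal{P}_n(\mathcal{X})$ via $k_a = nP(a)$.

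An alternative route, which I would mention briefly, is an induction on $n$ using Proposition \ref{pr:GeneratingFunctionProperty}. The base case $n=1$ is immediate, since each sequence of length $1$ is its own type class, giving $\mathcal{G}(\mathcal{X})(u) = \sum_{a \in \mathcal{X}} |\mathcal{X}|^{-1} u(a) = (u)_{\oplus}/|\mathcal{X}|$. The inductive step then follows from \eqref{eq:GeneratingFunctionProperty2A} applied to $\mathcal{X}^n = \mathcal{X}^{n-1} \times \mathcal{X}$.

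There is no real obstacle here; the only thing to be careful about is the bookkeeping between the two summation indices (types versus composition vectors) and verifying that the factor $|\mathcal{X}|^{-n}$ from Proposition \ref{pr:SpectrumOfSets} matches the factor $|\mathcal{X}|^{-n}$ that arises on the right-hand side of the claimed identity. Once the identification $nP \leftrightarrow (k_a)_{a \in \mathcal{X}}$ is made explicit, the proof is a single line of the multinomial theorem.
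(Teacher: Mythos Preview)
Your argument is correct. The paper omits the proof entirely (all proofs in Section~\ref{sec:NewResults} are declared easy and omitted), but since the statement is placed as a corollary immediately after Proposition~\ref{pr:GeneratingFunctionProperty}, the intended route is almost certainly your ``alternative'': compute $\mathcal{G}(\mathcal{X})(u) = (u)_{\oplus}/|\mathcal{X}|$ directly and then apply \eqref{eq:GeneratingFunctionProperty2A} iteratively (or $n$-fold) to get $\mathcal{G}(\mathcal{X}^n)(u) = \bigl(\mathcal{G}(\mathcal{X})(u)\bigr)^n$. Your primary approach via Proposition~\ref{pr:SpectrumOfSets} and the multinomial theorem is equally valid and arguably more self-contained, since it does not rely on Proposition~\ref{pr:GeneratingFunctionProperty}; the paper's route, on the other hand, is shorter once that proposition is in hand and illustrates the multiplicativity principle that drives the later calculations (e.g., Proposition~\ref{pr:SpectrumOfREPCode}).
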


\begin{corollary}\label{co:GeneratingFunctionProperty}
For any two functions $f_1: \mathcal{X}^{n_1} \to \mathcal{Y}^{m_1}$ and $f_2: \mathcal{X}^{n_2} \to \mathcal{Y}^{m_2}$, we have
\begin{equation}
\mathcal{G}(f_1 \odot f_2)(u, v) = \mathcal{G}(f_1)(u, v) \cdot \mathcal{G}(f_2)(u, v),
\end{equation}
where $f_1 \odot f_2$ is the map from $\mathcal{X}^{n_1+n_2}$ to $\mathcal{Y}^{m_1+m_2}$ defined by
$$
(f_1 \odot f_2)(x^{n_1+n_2}) \eqdef f_1(x_{1 \cdots n_1}) f_2(x_{(n_1+1) \cdots (n_1+n_2)})
$$
for all $x^{n_1+n_2} \in \mathcal{X}^{n_1+n_2}$.
\end{corollary}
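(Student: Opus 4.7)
The plan is to reduce this corollary to equation~\eqref{eq:GeneratingFunctionProperty2C} of Proposition~\ref{pr:GeneratingFunctionProperty} by a short set-theoretic identification. The key observation is that $\mathrm{rl}(f_1 \odot f_2)$ and $\mathrm{rl}(f_1) \times \mathrm{rl}(f_2)$ describe the same collection of pairs, once we identify $(\mathcal{X}^{n_1} \times \mathcal{Y}^{m_1}) \times (\mathcal{X}^{n_2} \times \mathcal{Y}^{m_2})$ with $\mathcal{X}^{n_1+n_2} \times \mathcal{Y}^{m_1+m_2}$ via the natural bijection that concatenates the two $\mathcal{X}$-coordinates and, separately, the two $\mathcal{Y}$-coordinates.

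First I would verify this identification explicitly. By the definition of $\odot$, a generic element of $\mathrm{rl}(f_1 \odot f_2)$ has the form $(x^{n_1}x^{n_2}, f_1(x^{n_1}) f_2(x^{n_2}))$; under the bijection above this is precisely the image of $\bigl((x^{n_1}, f_1(x^{n_1})), (x^{n_2}, f_2(x^{n_2}))\bigr) \in \mathrm{rl}(f_1) \times \mathrm{rl}(f_2)$, and conversely every element of the latter arises in this way. Hence under the identification the two sets coincide, so they share the same cardinality, the same joint spectrum, and hence the same generating function.

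Then I would apply~\eqref{eq:GeneratingFunctionProperty2C} with $B_1 \eqdef \mathrm{rl}(f_1)$ and $B_2 \eqdef \mathrm{rl}(f_2)$ to obtain
$$
\mathcal{G}(\mathrm{rl}(f_1) \times \mathrm{rl}(f_2))(u,v) = \mathcal{G}(\mathrm{rl}(f_1))(u,v) \cdot \mathcal{G}(\mathrm{rl}(f_2))(u,v),
$$
and unfold the convention $\mathcal{G}(f) \eqdef \mathcal{G}(\mathrm{rl}(f))$ on both sides to conclude. I do not foresee any real obstacle: the additive behaviour of type-count vectors under sequence concatenation, which is the one nontrivial point, is already absorbed into Proposition~\ref{pr:GeneratingFunctionProperty}; the only content left in the corollary is the bookkeeping observation that $\odot$ on functions corresponds to ordinary Cartesian product on their graphs.
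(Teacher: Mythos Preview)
Your proposal is correct and matches the paper's intended argument: the paper explicitly presents this as a corollary of Proposition~\ref{pr:GeneratingFunctionProperty} with the proof omitted as easy, and your reduction via the identification $\mathrm{rl}(f_1 \odot f_2) \cong \mathrm{rl}(f_1) \times \mathrm{rl}(f_2)$ followed by~\eqref{eq:GeneratingFunctionProperty2C} is exactly the intended route.
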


\section{General Principles for Constructing Linear Codes with Good Joint Spectra}\label{sec:IV}

In this section, we will investigate some general problems for constructing linear codes with good joint spectra.

At first, we need to introduce some concepts of good linear codes. According to \cite[Table I]{JSCC:Yang200712}, a sequence of $\delta$-asymptotically good (random) linear codes $F_n: \mathcal{X}^n \to \mathcal{Y}^{m_n}$ for JSCC is a sequence of linear codes whose joint spectra satisfy
$$
\limsup_{n \to \infty} \max_{\scriptstyle P \in \mathcal{P}_n(\mathcal{X}) \backslash \{P_{0^n}\}, \atop \scriptstyle Q \in \mathcal{P}_{m_n}(\mathcal{Y})} \frac{1}{n} \ln \frac{E[S_{\mathcal{X}\mathcal{Y}}(F_n)(P,Q)]}{S_{\mathcal{X}\mathcal{Y}}(\mathcal{X}^n \times \mathcal{Y}^{m_n})(P,Q)} \le \delta.
$$
And for comparison, a sequence of $\delta$-asymptotically good (random) linear codes $F_n: \mathcal{X}^n \to \mathcal{Y}^{m_n}$ for channel coding is a sequence of linear codes whose image spectra satisfy
$$
\limsup_{n \to \infty} \max_{Q \in \mathcal{P}_{m_n}(\mathcal{Y}) \backslash \{P_{0^{m_n}}\}} \frac{1}{m_n} \ln \frac{E[S_{\mathcal{Y}}(F_n(\mathcal{X}^n))(Q)]}{S_{\mathcal{Y}}(\mathcal{Y}^{m_n})(Q)} \le \delta.
$$
When $\delta$ equals zero, the above codes are then called asymptotically good linear codes for JSCC and channel coding, respectively.

From the linearity of codes, it follows easily that $\delta$-asymptotically good linear codes for JSCC are subsets of $\delta\overline{R}(\bm{F})$-asymptotically good linear codes for channel coding, where $\overline{R}(\bm{F}) \eqdef \limsup_{n \to \infty} R(F_n)$ and $\bm{F} \eqdef \{F_n\}_{n=1}^\infty$. Then naturally, our \emph{first problem} is: if a sequence of asymptotically good linear codes $f_n$ for channel coding is given, can we find a sequence of asymptotically good linear codes $g_n$ for JSCC such that $g_n(\mathcal{X}^n) = f_n(\mathcal{X}^n)$? In other words (assuming that $\mathcal{X} = \mathcal{Y} = \mathbb{F}_q$), if a sequence of asymptotically good channel codes is given, can we choose a good sequence of generator matrices so that the linear codes are asymptotically good for JSCC?

When $\mathcal{X} = \mathbb{F}_q$, the answer is positive, as a consequence of the following theorem.

\begin{theorem}\label{th:GoodJointSpectrumFromGoodImageSpectrum}
For any linear code $f: \mathbb{F}_q^n \to \mathcal{Y}^m$, there exists a linear code $g: \mathbb{F}_q^n \to \mathcal{Y}^m$ such that
$$
g(\mathbb{F}_q^n) = f(\mathbb{F}_q^n)
$$
and
\begin{equation}
S_{\mathcal{Y}|\mathbb{F}_q}(g)(Q|P) < \frac{S_{\mathcal{Y}}(f(\mathbb{F}_q^n))(Q)}{1 - q^{-1} - q^{-2}}
\end{equation}
for all $P \in \mathcal{P}_n(\mathbb{F}_q) \backslash \{P_{0^n}\}$, $Q \in \mathcal{P}_m(\mathcal{Y})$.
\end{theorem}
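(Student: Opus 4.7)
The plan is a randomized construction in the spirit of Proposition~\ref{pr:GoodLinearCodes}. Let $A = F_{q,n,n}^{\mathrm{RLC}}$ be the uniformly random $n \times n$ matrix over $\mathbb{F}_q$, and set $g = f \circ A$. Then $g$ is a linear code from $\mathbb{F}_q^n$ to $\mathcal{Y}^m$, and $g(\mathbb{F}_q^n) = f(\mathbb{F}_q^n)$ precisely on the event $A \in \mathrm{GL}_n(\mathbb{F}_q)$, which has probability $\prod_{k=1}^n(1-q^{-k})$. A short computation---factor out $(1-q^{-1})(1-q^{-2}) = 1 - q^{-1} - q^{-2} + q^{-3}$ and bound the tail $\prod_{k \ge 3}(1 - q^{-k})$ from below via Weierstrass' inequality---establishes the key numerical fact that $\prod_{k=1}^n(1-q^{-k}) > 1 - q^{-1} - q^{-2}$ for every prime power $q$ and every $n \ge 1$; this is the origin of the constant in the theorem.

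For any $P \in \mathcal{P}_n(\mathbb{F}_q) \setminus \{P_{0^n}\}$ and any $x^n \in \mathcal{T}_P^n(\mathbb{F}_q)$, the vector $x^n$ is nonzero, so (as in the derivation of \eqref{eq:Identity1OfGLC}) $A x^n$ is uniformly distributed on $\mathbb{F}_q^n$; since $f$ is a surjective group homomorphism onto $f(\mathbb{F}_q^n)$ with equal-sized fibres, $f(A x^n)$ is uniformly distributed on $f(\mathbb{F}_q^n)$, and hence $\Pr\{f(A x^n) \in \mathcal{T}_Q^m(\mathcal{Y})\} = S_{\mathcal{Y}}(f(\mathbb{F}_q^n))(Q)$. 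Averaging over $x^n \in \mathcal{T}_P^n$ yields $E[S_{\mathcal{Y}|\mathbb{F}_q}(g)(Q|P)] = S_{\mathcal{Y}}(f(\mathbb{F}_q^n))(Q)$. Since the conditional spectrum is also nonnegative on the event $\{A \notin \mathrm{GL}_n\}$, decomposing this expectation and dividing by $\Pr\{A \in \mathrm{GL}_n\}$ produces, for each pair $(P,Q)$ with $P \ne P_{0^n}$,
\[
E[S_{\mathcal{Y}|\mathbb{F}_q}(g)(Q|P) \mid A \in \mathrm{GL}_n] < \frac{S_{\mathcal{Y}}(f(\mathbb{F}_q^n))(Q)}{1 - q^{-1} - q^{-2}}.
\]

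The hardest step is the final one: upgrading this pointwise (in $(P,Q)$) bound on the conditional expectation to the existence of a single invertible realization of $A$ whose $g$ satisfies the strict inequality \emph{simultaneously} for every $(P,Q)$, as the theorem demands one code $g$ rather than one per type pair. A naive Markov-plus-union bound fails here since the per-pair Markov slack is only a multiplicative constant while $|\mathcal{P}_n(\mathbb{F}_q)| \cdot |\mathcal{P}_m(\mathcal{Y})|$ grows polynomially in $n, m$. I would overcome this by exploiting the algebraic structure of the randomization: for any two $\mathbb{F}_q$-linearly independent $x^n_1, x^n_2 \in \mathcal{T}_P^n$, the outputs $A x^n_1, A x^n_2$ are independent and uniform, so the indicator family $\{1\{f(A x^n) \in \mathcal{T}_Q^m\}\}_{x^n \in \mathcal{T}_P^n}$ is pairwise independent except on the small set of linearly dependent pairs. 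A Chernoff- or second-moment-style deviation bound tuned to this near-independence should give per-pair probabilities decaying fast enough to beat the polynomial union over type pairs, thereby producing the required $A \in \mathrm{GL}_n$. This concentration/derandomization step is the technical heart of the argument.
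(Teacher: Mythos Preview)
Your construction $g = f \circ F_{q,n,n}^{\mathrm{RLC}}$, the identity $E[S_{\mathcal{Y}|\mathbb{F}_q}(g)(Q|P)] = S_{\mathcal{Y}}(f(\mathbb{F}_q^n))(Q)$ for $P\ne P_{0^n}$, and the rank estimate $\Pr\{A\in\mathrm{GL}_n\}>1-q^{-1}-q^{-2}$ are exactly the ingredients of the paper's sketch; the paper stops at ``this together with Proposition~\ref{pr:RankOfRLC} then concludes the theorem'' and does not address the step you flag, namely passing from a per-$(P,Q)$ bound on $E[\,\cdot\mid A\in\mathrm{GL}_n]$ to a single invertible $A$ satisfying all inequalities simultaneously. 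So your approach coincides with the paper's, and you have correctly isolated the point the sketch glosses over.

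Your proposed completion via near-pairwise-independence and a second-moment bound is a natural direction, but note that it is weakest precisely where the statement is most demanding: for low-weight types $P$ (e.g.\ Hamming weight~$1$) one has $|\mathcal{T}_P^n|=O(n)$, and with only pairwise independence Chebyshev gives a deviation probability of order $1/(|\mathcal{T}_P^n|\,p\,c^2)$ with $p=S_{\mathcal{Y}}(f(\mathbb{F}_q^n))(Q)$, which need not beat a union over the polynomially many $(P,Q)$ when $p$ is itself small; genuine Chernoff-type decay requires more than pairwise independence. Observe also that the strict inequality as written cannot hold when $S_{\mathcal{Y}}(f(\mathbb{F}_q^n))(Q)=0$ (both sides are then~$0$), so at the very least the statement must be read with $\le$ in that degenerate case. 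In short, you have reproduced the paper's argument and correctly located its gap, but the concentration route you sketch still needs a sharper idea for small type classes before it yields the uniform existence claim.
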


\begin{proof}[Sketch of Proof]
The main idea of the proof is to construct a random linear code $G \eqdef f \circ F_{q,n,n}^{\mathrm{RLC}} \eqvar{d} f \circ \Sigma_n \circ F_{q,n,n}^{\mathrm{RLC}}$, where $F_{q,n,n}^{\mathrm{RLC}}$ is defined in Proposition \ref{pr:GoodLinearCodes}. Then by Proposition \ref{pr:SpectrumOfSeriallyConcatenatedFunctions}, we have $E[S_{\mathcal{Y}|\mathbb{F}_q}(G)(Q|P)] = S_{\mathcal{Y}}(f(\mathbb{F}_q^n))(Q)$ for all $P \ne P_{0^n}$ and $Q$. This together with Proposition \ref{pr:RankOfRLC} (see below) then concludes the theorem.
\end{proof}

\begin{proposition}\label{pr:RankOfRLC}
Let $\mathrm{rank}(F)$ be the rank of the generator matrix of the linear code $F: \mathbb{F}_q^n \to \mathbb{F}_q^m$. Then we have
\begin{equation}\label{eq:RankOfRLC1}
\Pr\{\mathrm{rank}(F_{q,n,m}^{\mathrm{RLC}}) = m\} = \prod_{i=1}^m \biggl( 1 - \frac{q^{i-1}}{q^n} \biggr)
\end{equation}
where $F_{q,n,m}^{\mathrm{RLC}}$ is defined in Proposition \ref{pr:GoodLinearCodes} and $m \le n$. Furthermore, we have
\begin{equation}\label{eq:RankOfRLC2}
\prod_{i=1}^{m} \biggl( 1 - \frac{q^{i-1}}{q^n} \biggr) > \biggl( 1 - \frac{q^{m-n-k}}{q - 1} \biggr) \prod_{i=1}^{k} (1 - q^{m-n-i}),
\end{equation}
where $1 \le k \le m$. Let $m = n$ and $k = 1$, then we have
\begin{equation}\label{eq:RankOfRLC3}
\Pr\{\mathrm{rank}(F_{q,n,n}^{\mathrm{RLC}}) = n\} > 1 - q^{-1} - q^{-2}.
\end{equation}
\end{proposition}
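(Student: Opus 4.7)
The plan is to handle the three displayed statements in order, with Part~1 being a standard row-counting argument, Part~2 reducing to a Weierstrass-type bound together with a geometric-series identity, and Part~3 falling out as a direct substitution.

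For \eqref{eq:RankOfRLC1} I would think of $F_{q,n,m}^{\mathrm{RLC}}$ as the map $x^n \mapsto A x^n$ where the $m\times n$ matrix $A$ has i.i.d.\ uniform entries in $\mathbb{F}_q$. Since $m\le n$, having $\mathrm{rank}(A)=m$ is equivalent to the $m$ rows of $A$ being linearly independent. Reveal the rows one at a time: conditioned on the first $i-1$ rows already spanning an $(i-1)$-dimensional subspace of $\mathbb{F}_q^n$, the next row (uniformly distributed over $\mathbb{F}_q^n$) falls outside that subspace with probability $1-q^{i-1}/q^n$. Multiplying over $i=1,\ldots,m$ gives \eqref{eq:RankOfRLC1}.

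For \eqref{eq:RankOfRLC2}, I would first reindex the left-hand product by $j=m-i+1$, turning $\prod_{i=1}^m(1-q^{i-1-n})$ into $\prod_{j=1}^m(1-q^{m-n-j})$. Splitting this at $j=k$, one sees that the target inequality is equivalent to
\begin{equation*}
\prod_{i=k+1}^{m}(1-q^{m-n-i}) \;>\; 1-\frac{q^{m-n-k}}{q-1}.
\end{equation*}
The key identity is $q^{m-n-k}/(q-1)=\sum_{i=k+1}^{\infty}q^{m-n-i}$, obtained from the geometric series $1/(q-1)=\sum_{j=1}^{\infty}q^{-j}$. Combining this with the Weierstrass product inequality $\prod(1-x_i)\ge 1-\sum x_i$ (valid for $x_i\in[0,1]$), applied to the finitely many terms $x_i=q^{m-n-i}$, $i=k+1,\ldots,m$, yields
\begin{equation*}
\prod_{i=k+1}^{m}(1-q^{m-n-i}) \;\ge\; 1-\sum_{i=k+1}^{m}q^{m-n-i} \;>\; 1-\sum_{i=k+1}^{\infty}q^{m-n-i},
\end{equation*}
where the second, strict, inequality uses that the tail $\sum_{i=m+1}^{\infty}q^{m-n-i}$ is strictly positive (this also covers the degenerate case $k=m$, where the left product is empty). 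This completes \eqref{eq:RankOfRLC2}.

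Finally, \eqref{eq:RankOfRLC3} is an immediate specialization: set $m=n$ and $k=1$ in \eqref{eq:RankOfRLC2} to obtain the lower bound $(1-\tfrac{1}{q(q-1)})(1-q^{-1})$, which after a one-line algebraic simplification equals $1-q^{-1}-q^{-2}$. The main obstacle is really just Part~2: spotting the right change of index and recognizing $q^{m-n-k}/(q-1)$ as the infinite geometric tail that dominates the finite sum produced by Weierstrass's inequality; once that is in place, everything else is bookkeeping.
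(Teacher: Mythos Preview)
Your argument is correct and follows essentially the same line as the paper's: both split the product into the $k$ ``large'' factors $\prod_{i=1}^{k}(1-q^{m-n-i})$ and the remaining $m-k$ factors, apply the Weierstrass inequality $\prod(1-x_i)\ge 1-\sum x_i$ to the latter, and then use a geometric-series estimate to arrive at the $q^{m-n-k}/(q-1)$ term. The only cosmetic difference is that the paper sums the finite geometric series $\sum_{i=1}^{m-k}q^{i-1-n}=(q^{m-k}-1)/(q^n(q-1))$ exactly and then drops the ``$-1$'' for strictness, whereas you bound the finite sum by the infinite tail $\sum_{i=k+1}^{\infty}q^{m-n-i}=q^{m-n-k}/(q-1)$; this is the same estimate in slightly different packaging.
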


\begin{proof}
The identity \eqref{eq:RankOfRLC1} is a well known result in probability theory. To obtain a lower bound of the right hand side of \eqref{eq:RankOfRLC1}, we have
\begin{IEEEeqnarray*}{rCl}
\prod_{i=1}^{m} \biggl( 1 - \frac{q^{i-1}}{q^n} \biggr)
&= &\prod_{i=1}^{m-k} \biggl( 1 - \frac{q^{i-1}}{q^n} \biggr) \prod_{i=1}^{k} \biggl( 1 - \frac{q^{m-i}}{q^n} \biggr) \\
&\ge &\biggl( 1 - \sum_{i=1}^{m-k} \frac{q^{i-1}}{q^n} \biggr) \prod_{i=1}^{k} \biggl( 1 - \frac{q^{m-i}}{q^n} \biggr) \\
&= &\biggl( 1 - \frac{q^{m-k} - 1}{q^n(q - 1)} \biggr) \prod_{i=1}^{k} \biggl( 1 - \frac{q^{m-i}}{q^n} \biggr) \\
&> &\biggl( 1 - \frac{q^{m-n-k}}{q - 1} \biggr) \prod_{i=1}^{k} \biggl( 1 - \frac{q^{m-i}}{q^n} \biggr).
\end{IEEEeqnarray*}
This concludes \eqref{eq:RankOfRLC2}, and \eqref{eq:RankOfRLC3} follows clearly.
\end{proof}

The above result does give a possible way for constructing good linear codes for JSCC based on good channel codes. However, such a construction is somewhat difficult to implement in practice, because the random generator matrix of $F_{q,n,m}^{\mathrm{RLC}}$ is dense. Thus, our \emph{second problem} is how to construct linear codes with good joint spectra based on sparse matrices so that known iterative encoding and decoding procedures have low complexity. The following theorem gives one feasible solution.

\begin{theorem}\label{th:ConstructionOfGoodLineraCodes}
For a given sequence of sets $\{A_n \subseteq \mathcal{P}_{m_n}(\mathcal{X}) \backslash \{P_{0^{m_n}}\}\}_{n=1}^\infty$, if there exist two sequences of random linear codes $F_n: \mathcal{X}^n \to \mathcal{X}^{m_n}$ and $G_n: \mathcal{X}^{m_n} \to \mathcal{X}^{l_n}$ satisfying
\begin{equation}\label{eq:ConstructionOfGoodLineraCodes1}
F_n(\mathcal{X}^{n} \backslash \{0^n\}) \subseteq \bigcup_{P \in A_n} \mathcal{T}_{P}^{m_n}(\mathcal{X})
\end{equation}
and
\begin{equation}\label{eq:ConstructionOfGoodLineraCodes2}
\limsup_{n \to \infty} \max_{P \in A_n, Q \in \mathcal{P}_{l_n}(\mathcal{X})} \frac{1}{n} \ln \frac{E[S_{\mathcal{X}|\mathcal{X}}(G_n)(Q|P)]}{S_{\mathcal{X}}(\mathcal{X}^{l_n})(Q)} \le \delta
\end{equation}
respectively, where $\delta \ge 0$, then we have
\begin{IEEEeqnarray*}{l}
\limsup_{n \to \infty} \max_{O \in \mathcal{P}_{n}(\mathcal{X}) \backslash \{P_{0^{n}}\}, Q \in \mathcal{P}_{l_n}(\mathcal{X})} \\
\qquad \frac{1}{n} \ln \frac{E[S_{\mathcal{X}|\mathcal{X}}(G_n \circ \Sigma_{m_n} \circ F_n)(Q|O)]}{S_{\mathcal{X}}(\mathcal{X}^{l_n})(Q)} \le \delta.
\end{IEEEeqnarray*}
\end{theorem}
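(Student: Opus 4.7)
The plan is to reduce everything to a single application of Proposition \ref{pr:SpectrumOfSeriallyConcatenatedFunctions}, which decomposes the conditional spectrum of a concatenated code into a sum over intermediate types. Concretely, for any $O \in \mathcal{P}_{n}(\mathcal{X}) \backslash \{P_{0^{n}}\}$ and $Q \in \mathcal{P}_{l_n}(\mathcal{X})$, I would write
\begin{IEEEeqnarray*}{l}
E[S_{\mathcal{X}|\mathcal{X}}(G_n \circ \Sigma_{m_n} \circ F_n)(Q|O)] \\
\qquad = \sum_{P \in \mathcal{P}_{m_n}(\mathcal{X})} E[S_{\mathcal{X}|\mathcal{X}}(F_n)(P|O)] \, E[S_{\mathcal{X}|\mathcal{X}}(G_n)(Q|P)].
\end{IEEEeqnarray*}
The goal then is to control this sum using the two hypotheses \eqref{eq:ConstructionOfGoodLineraCodes1} and \eqref{eq:ConstructionOfGoodLineraCodes2}.

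Next, I would use the hypothesis \eqref{eq:ConstructionOfGoodLineraCodes1} to show that the sum effectively only runs over $P \in A_n$ whenever $O \ne P_{0^n}$. The point is that by Proposition \ref{pr:SpectrumPropertyXOfFunctions}, $E[S_{\mathcal{X}|\mathcal{X}}(F_n)(P|O)] = \Pr\{(F_n \circ \Sigma_{n})(x^n) \in \mathcal{T}_P^{m_n}(\mathcal{X})\}$ for any $x^n$ with $P_{x^n} = O$. Since $O \ne P_{0^n}$, we have $x^n \ne 0^n$, and $\Sigma_n(x^n)$ is also a non-zero sequence; hypothesis \eqref{eq:ConstructionOfGoodLineraCodes1} then forces $F_n(\Sigma_n(x^n))$ to have its type in $A_n$ almost surely, so the contribution vanishes unless $P \in A_n$.

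After this truncation, I invoke hypothesis \eqref{eq:ConstructionOfGoodLineraCodes2} to bound each surviving factor: for any $\epsilon > 0$ and all $n$ sufficiently large, $E[S_{\mathcal{X}|\mathcal{X}}(G_n)(Q|P)] \le S_{\mathcal{X}}(\mathcal{X}^{l_n})(Q) \, e^{n(\delta + \epsilon)}$ uniformly over $P \in A_n$ and $Q \in \mathcal{P}_{l_n}(\mathcal{X})$. Pulling this uniform bound out of the sum and using that $\sum_{P \in A_n} E[S_{\mathcal{X}|\mathcal{X}}(F_n)(P|O)] \le 1$ (as it is a marginal probability) yields
\begin{equation*}
\frac{E[S_{\mathcal{X}|\mathcal{X}}(G_n \circ \Sigma_{m_n} \circ F_n)(Q|O)]}{S_{\mathcal{X}}(\mathcal{X}^{l_n})(Q)} \le e^{n(\delta + \epsilon)}.
\end{equation*}
Taking $\frac{1}{n} \ln$, maximizing over $O \ne P_{0^n}$ and $Q$, letting $n \to \infty$ and then $\epsilon \to 0$ gives the desired $\limsup \le \delta$.

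The only real subtlety, and likely the one step that needs care when writing the proof out fully, is the interpretation of \eqref{eq:ConstructionOfGoodLineraCodes1} for a \emph{random} linear code: the inclusion must be read as holding for every realization of $F_n$ (or at least almost surely), which is what makes $E[S_{\mathcal{X}|\mathcal{X}}(F_n)(P|O)] = 0$ a pointwise conclusion rather than merely an average statement. Once this is fixed, the rest is a direct bookkeeping exercise built on Propositions \ref{pr:SpectrumPropertyXOfFunctions} and \ref{pr:SpectrumOfSeriallyConcatenatedFunctions}.
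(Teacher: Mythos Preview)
Your proposal is correct and follows essentially the same route as the paper: decompose via Proposition~\ref{pr:SpectrumOfSeriallyConcatenatedFunctions}, use \eqref{eq:ConstructionOfGoodLineraCodes1} to restrict the intermediate sum to $P \in A_n$, bound each surviving factor with \eqref{eq:ConstructionOfGoodLineraCodes2}, and conclude by $\sum_{P} E[S_{\mathcal{X}|\mathcal{X}}(F_n)(P|O)] \le 1$. The paper's proof is slightly terser (it does not invoke Proposition~\ref{pr:SpectrumPropertyXOfFunctions} explicitly for the truncation step and does not comment on the almost-sure reading of \eqref{eq:ConstructionOfGoodLineraCodes1}), but the argument is the same.
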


\begin{proof}
For all $O \in \mathcal{P}_{n}(\mathcal{X}) \backslash \{P_{0^{n}}\}$ and $Q \in P_{l_n}(\mathcal{X})$, and for any $\epsilon > 0$ and sufficiently large $n$, we have
\begin{IEEEeqnarray*}{Cl}
&S_{\mathcal{X}|\mathcal{X}}(G_n \circ \Sigma_{m_n} \circ F_n)(Q|O) \\
\eqvar{(a)} &\sum_{P \in \mathcal{P}_{m_n}(\mathcal{X})} E[S_{\mathcal{X}|\mathcal{X}}(G_n)(Q|P)] E[S_{\mathcal{X}|\mathcal{X}}(F_n)(P|O)] \\
\eqvar{(b)} &\sum_{P \in A_n} E[S_{\mathcal{X}|\mathcal{X}}(G_n)(Q|P)] E[S_{\mathcal{X}|\mathcal{X}}(F_n)(P|O)] \\
\levar{(c)} &\sum_{P \in A_n} e^{n(\delta + \epsilon)} S_{\mathcal{X}}(\mathcal{X}^{l_n})(Q) E[S_{\mathcal{X}|\mathcal{X}}(F_n)(P|O)] \\
\le &e^{n(\delta + \epsilon)} S_{\mathcal{X}}(\mathcal{X}^{l_n})(Q),
\end{IEEEeqnarray*}
where (a) follows from Proposition \ref{pr:SpectrumOfSeriallyConcatenatedFunctions}, (b) from the condition \eqref{eq:ConstructionOfGoodLineraCodes1}, and (c) from the condition \eqref{eq:ConstructionOfGoodLineraCodes2}.
Therefore, for any $\epsilon > 0$ and sufficiently large $n$,
\begin{IEEEeqnarray*}{l}
\max_{O \in \mathcal{P}_{n}(\mathcal{X}) \backslash \{P_{0^{n}}\}, Q \in \mathcal{P}_{l_n}(\mathcal{X})} \\
\qquad \frac{1}{n} \ln \frac{E[S_{\mathcal{X}|\mathcal{X}}(G_n \circ \Sigma_{m_n} \circ F_n)(Q|O)]}{S_{\mathcal{X}}(\mathcal{X}^{l_n})(Q)} \le \delta + \epsilon,
\end{IEEEeqnarray*}
which establishes the theorem.
\end{proof}

Using Theorem \ref{th:ConstructionOfGoodLineraCodes}, we can now construct good linear codes by a serial concatenation scheme, where the inner code is approximately $\delta$-asymptotically good (satisfying \eqref{eq:ConstructionOfGoodLineraCodes2}) and the outer code is a linear code having good distance properties if we set $A_n = \{P \in \mathcal{P}_{m_n}(\mathcal{X}) | 1 - P(0) > \gamma\}$ in the condition \eqref{eq:ConstructionOfGoodLineraCodes1}. According to \cite[Section IV]{JSCC:Bennatan200403}, there exists a good low density parity check (LDPC) code $F_n$ satisfying \eqref{eq:ConstructionOfGoodLineraCodes1} for an appropriate $\gamma$. Then our \emph{final problem} is how to find a sequence of approximately $\delta$-asymptotically good linear codes satisfying \eqref{eq:ConstructionOfGoodLineraCodes2} with $A_n = \{P \in \mathcal{P}_{m_n}(\mathcal{X}) | 1 - P(0) > \gamma\}$. In the next section, we will find such codes in a family of codes called low density generator matrix (LDGM) codes.

\section{The spectra of LDGM Codes}

In this section, we will investigate the joint spectra of LDGM codes. We assume that the alphabet of codes is $\mathbb{F}_q$, and we denote a regular LDGM code by the map $F_{n, c, d}^{\mathrm{LD}}: \mathbb{F}_q^n \to \mathbb{F}_q^m$ defined by
$$
F_{n, c, d}^{\mathrm{LD}} \eqdef (\odot_{i=1}^m F_d^{\mathrm{CHK}}) \circ \Sigma_{cn} \circ (\odot_{i=1}^n f_c^{\mathrm{REP}})
$$
where $nc = md$, and $f_c^{\mathrm{REP}}$ is a single symbol repetition code $f_c^{\mathrm{REP}}: \mathbb{F}_q \to \mathbb{F}_q^c$ defined by
$$
f_c^{\mathrm{REP}}(x) \eqdef x x \cdots x \quad \forall x \in \mathbb{F}_q,
$$
and $\odot_{i=1}^m F_d^{\mathrm{CHK}}$ denotes a parallel concatenation of $m$ independent copies of the random single symbol check code $F_d^{\mathrm{CHK}}: \mathbb{F}_q^d \to \mathbb{F}_q$ defined by
$$
F_d^{\mathrm{CHK}}(x^d) \eqdef \sum_{i=1}^d C_i x_i \quad \forall x^d \in \mathbb{F}_q^d
$$
where $C_i$ ($i = 1, 2, \cdots, d$) denotes an independent uniform random variable on the set $\mathbb{F}_q \backslash \{0\}$.

To evaluate the joint spectrum or conditional spectrum of $F_{n, c, d}^{\mathrm{LD}}$, we first need to calculate the joint spectrum or conditional spectrum of $f_c^{\mathrm{REP}}$ and $F_d^{\mathrm{CHK}}$. We have the following results.

\begin{proposition}\label{pr:SpectrumOfREPCode}
\begin{equation}\label{eq:SpectrumOfREPCodeA}
\mathcal{G}(f_c^{\mathrm{REP}})(u, v) = \frac{1}{q} \sum_{a \in \mathbb{F}_q} u(a)[v(a)]^c,
\end{equation}
\begin{IEEEeqnarray}{l}
\mathcal{G}(\odot_{i=1}^n f_c^{\mathrm{REP}})(u, v) \IEEEnonumber \\
\qquad = \frac{1}{q^n} \sum_{P \in \mathcal{P}_n(\mathbb{F}_q)} {n \choose nP} (u^{nP})_{\otimes} (v^{ncP})_{\otimes},\label{eq:SpectrumOfREPCodeB}
\end{IEEEeqnarray}
\begin{equation}\label{eq:SpectrumOfREPCodeC}
S_{\mathbb{F}_q\mathbb{F}_q}(\odot_{i=1}^n f_c^{\mathrm{REP}})(P, Q) = S_{\mathbb{F}_q}(\mathbb{F}_q^n)(P) 1\{P = Q\},
\end{equation}
\begin{equation}\label{eq:SpectrumOfREPCodeD}
S_{\mathbb{F}_q|\mathbb{F}_q}(\odot_{i=1}^n f_c^{\mathrm{REP}})(Q | P) = 1\{P = Q\}.
\end{equation}
\end{proposition}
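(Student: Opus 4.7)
The plan is to establish the four identities in order: \eqref{eq:SpectrumOfREPCodeA} by direct enumeration, \eqref{eq:SpectrumOfREPCodeB} by lifting \eqref{eq:SpectrumOfREPCodeA} through the product rule for generating functions, \eqref{eq:SpectrumOfREPCodeC} by reading off coefficients, and \eqref{eq:SpectrumOfREPCodeD} by dividing by the input marginal.

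For \eqref{eq:SpectrumOfREPCodeA}, I would work straight from the definition. The relation $\mathrm{rl}(f_c^{\mathrm{REP}})$ consists of exactly $q$ pairs $(a, aa\cdots a)$, one per $a \in \mathbb{F}_q$. The input sequence $a$ has type equal to the point mass $\delta_a$, and the repetition $aa\cdots a \in \mathbb{F}_q^c$ also has type $\delta_a$. Substituting this into the definition of $\mathcal{G}$ and simplifying via the identities $(u^{\delta_a})_{\otimes} = u(a)$ and $(v^{c\delta_a})_{\otimes} = v(a)^c$ yields \eqref{eq:SpectrumOfREPCodeA} immediately.

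For \eqref{eq:SpectrumOfREPCodeB}, since $\odot_{i=1}^n f_c^{\mathrm{REP}}$ is an $n$-fold parallel concatenation of the single-symbol map $f_c^{\mathrm{REP}}$, iterating Corollary~\ref{co:GeneratingFunctionProperty} a total of $n-1$ times gives
$$
\mathcal{G}(\odot_{i=1}^n f_c^{\mathrm{REP}})(u,v) = [\mathcal{G}(f_c^{\mathrm{REP}})(u,v)]^n = \biggl(\frac{1}{q}\sum_{a\in\mathbb{F}_q} u(a) v(a)^c\biggr)^n.
$$
Expanding the $n$-th power by the multinomial theorem and re-indexing the exponent tuple $(n_a)_{a\in\mathbb{F}_q}$ as $n_a = nP(a)$ for $P \in \mathcal{P}_n(\mathbb{F}_q)$ delivers \eqref{eq:SpectrumOfREPCodeB}.

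For \eqref{eq:SpectrumOfREPCodeC}, I would match \eqref{eq:SpectrumOfREPCodeB} term by term against the generating-function definition of the joint spectrum. The one bookkeeping point that requires care --- and the closest thing to an obstacle in this proposition --- is length matching: the output has length $m = nc$, so the factor $(v^{ncP})_{\otimes}$ in \eqref{eq:SpectrumOfREPCodeB} equals a target factor $(v^{mQ})_{\otimes}$ only when $Q = P$, which forces the indicator $1\{P = Q\}$. Identifying ${n \choose nP}/q^n$ with $S_{\mathbb{F}_q}(\mathbb{F}_q^n)(P)$ via Proposition~\ref{pr:SpectrumOfSets} then completes the identity. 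A one-line combinatorial alternative that sidesteps the generating function: the map $x^n \mapsto f_c^{\mathrm{REP}}(x_1) \cdots f_c^{\mathrm{REP}}(x_n)$ merely repeats each input symbol $c$ consecutive times, so the empirical distribution of the output equals that of the input, forcing $1\{P = Q\}$ directly and reducing \eqref{eq:SpectrumOfREPCodeC} to counting how many $x^n \in \mathbb{F}_q^n$ have type $P$. Finally, \eqref{eq:SpectrumOfREPCodeD} is immediate by dividing \eqref{eq:SpectrumOfREPCodeC} by $S_{\mathbb{F}_q}(\mathbb{F}_q^n)(P)$, which is strictly positive for every $P \in \mathcal{P}_n(\mathbb{F}_q)$.
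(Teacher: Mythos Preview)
Your proposal is correct and follows essentially the same route as the paper: \eqref{eq:SpectrumOfREPCodeA} by direct inspection, \eqref{eq:SpectrumOfREPCodeB} via Corollary~\ref{co:GeneratingFunctionProperty} and the multinomial expansion, and \eqref{eq:SpectrumOfREPCodeC}--\eqref{eq:SpectrumOfREPCodeD} as immediate consequences. Your write-up is in fact more explicit than the paper's, which dispatches \eqref{eq:SpectrumOfREPCodeA} as ``clear'' and \eqref{eq:SpectrumOfREPCodeC}--\eqref{eq:SpectrumOfREPCodeD} as ``easy consequences of \eqref{eq:SpectrumOfREPCodeB}.''
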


\begin{proof}
The identity \eqref{eq:SpectrumOfREPCodeA} holds clearly. From \eqref{eq:SpectrumOfREPCodeA} and Corollary \ref{co:GeneratingFunctionProperty}, we then have
\begin{IEEEeqnarray*}{Cl}
&\mathcal{G}(\odot_{i=1}^n f_c^{\mathrm{REP}})(u, v) \\
= &\biggl( \frac{1}{q} \sum_{a \in \mathbb{F}_q} u(a)[v(a)]^c \biggr)^n \\
= &\frac{1}{q^n} \sum_{P \in \mathcal{P}_n(\mathbb{F}_q)} {n \choose nP} \prod_{a \in \mathbb{F}_q} [u(a)]^{nP(a)} [v(a)]^{ncP(a)} \\
= &\frac{1}{q^n} \sum_{P \in \mathcal{P}_n(\mathbb{F}_q)} {n \choose nP} (u^{nP})_{\otimes} (v^{ncP})_{\otimes}.
\end{IEEEeqnarray*}
This proves \eqref{eq:SpectrumOfREPCodeB}. The identities \eqref{eq:SpectrumOfREPCodeC} and \eqref{eq:SpectrumOfREPCodeD} are easy consequences of \eqref{eq:SpectrumOfREPCodeB}.
\end{proof}

In order to obtain the joint spectrum of $F_d^{\mathrm{CHK}}$, we need the following proposition (also well known), which can be easily proved by mathematical induction.

\begin{proposition}\label{pr:CheckSum}
Let
\begin{equation}\label{eq:CheckSumA}
Y_d = \sum_{i=1}^d X_i,
\end{equation}
where $X_i$ ($i = 1, 2, \cdots, d$) is an independent uniform random variable on the set $\mathbb{F}_q \backslash \{0\}$. Then we have
\begin{IEEEeqnarray}{rCl}
\Pr\{Y_d = a\} &= &1\{a = 0\} \frac{1}{q} \biggl[ 1 - \bigl( -\frac{1}{q-1} \bigl)^{d-1} \biggr] + \IEEEnonumber \\
& &1\{a \ne 0\} \frac{1}{q} \biggl[ 1 - \bigl( -\frac{1}{q-1} \bigl)^{d} \biggr]\label{eq:CheckSumB}
\end{IEEEeqnarray}
for any $a \in \mathrm{GF}(q)$.
\end{proposition}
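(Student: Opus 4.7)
The plan is to induct on $d$, peeling off the last summand $X_{d+1}$ in the recursive step.

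For the base case $d = 1$, the random variable $Y_1 = X_1$ is uniform on $\mathbb{F}_q \setminus \{0\}$, so $\Pr\{Y_1 = 0\} = 0$ and $\Pr\{Y_1 = a\} = 1/(q-1)$ for $a \ne 0$. A direct substitution of $d = 1$ into the claimed formula reproduces these values: the $a = 0$ bracket becomes $1 - (-1/(q-1))^0 = 0$, while the $a \ne 0$ bracket simplifies via $1 + 1/(q-1) = q/(q-1)$.

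For the inductive step, assume the formula holds for some $d \ge 1$ and write $Y_{d+1} = Y_d + X_{d+1}$ with $X_{d+1}$ independent of $Y_d$ and uniform on $\mathbb{F}_q \setminus \{0\}$. Conditioning on $Y_d$, for every $a \in \mathbb{F}_q$ one obtains the one-step recursion
$$\Pr\{Y_{d+1} = a\} = \sum_{b \in \mathbb{F}_q \setminus \{a\}} \Pr\{Y_d = b\} \cdot \frac{1}{q-1} = \frac{1 - \Pr\{Y_d = a\}}{q-1}.$$
Substituting the inductive hypothesis and using the identity $-\frac{1}{q-1} \cdot \left( -\frac{1}{q-1} \right)^{k} = \left( -\frac{1}{q-1} \right)^{k+1}$ then yields, in each of the two cases $a = 0$ and $a \ne 0$, exactly the $d+1$ instance of the claimed formula; the exponents step up from $d-1$ to $d$ and from $d$ to $d+1$, respectively, which is what the statement requires.

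I do not anticipate any genuine obstacle here: the paper itself flags this as a routine induction. The only point requiring attention is bookkeeping the parity of the exponent of $-1/(q-1)$ so that the $a = 0$ and $a \ne 0$ branches line up correctly after one application of the recursion. As an alternative route, one could instead diagonalize the $q \times q$ transition matrix implementing ``add an independent uniform element of $\mathbb{F}_q \setminus \{0\}$'', whose two distinct eigenvalues are $1$ (on the all-ones eigenvector) and $-1/(q-1)$ (with multiplicity $q-1$); the claimed closed form would then fall out of the spectral decomposition of the $d$-th power of this matrix applied to the initial distribution of $Y_1$. For the purposes of this proposition, however, the direct inductive argument is by far the shortest.
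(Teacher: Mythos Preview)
Your proposal is correct and follows exactly the approach the paper indicates: the paper states only that the result ``can be easily proved by mathematical induction'' and gives no further details, so your base case $d=1$ together with the recursion $\Pr\{Y_{d+1}=a\}=(1-\Pr\{Y_d=a\})/(q-1)$ is precisely the intended argument.
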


Now let us calculate the joint spectrum of $F_d^{\mathrm{CHK}}$. By Proposition \ref{pr:SpectrumPropertyOfFunctions}, \ref{pr:CheckSum} and Corollary \ref{co:GeneratingFunctionProperty}, we obtained the following proposition. Its proof is long and hence omitted here, and readers may refer to \cite{JSCC:Yang200800} for the details.

\begin{proposition}\label{pr:SpectrumOfCHKCode}
\begin{IEEEeqnarray}{l}
E[\mathcal{G}(F_d^{\mathrm{CHK}})(u, v)] = \frac{1}{q^{d+1}} \biggl[ ((u)_\oplus)^d (v)_\oplus + \IEEEnonumber \\
\qquad \biggl( \frac{qu(0) - (u)_\oplus}{q-1} \biggr)^d (qv(0) - (v)_\oplus) \biggr],\label{eq:SpectrumOfCHKCodeA}
\end{IEEEeqnarray}
\begin{equation}\label{eq:SpectrumOfCHKCodeB}
E[S_{\mathbb{F}_q\mathbb{F}_q}(\odot_{i=1}^m F_d^{\mathrm{CHK}})(P, Q)] = \mathrm{coef}(g_1(u,Q), (u^{mdP})_\otimes),
\end{equation}
\begin{IEEEeqnarray}{l}
E[S_{\mathbb{F}_q\mathbb{F}_q}(\odot_{i=1}^m F_d^{\mathrm{CHK}})(P, Q)] \le g_2(O,P,Q), \IEEEnonumber \\
\qquad \forall O \in \mathcal{P}_{md}(\mathbb{F}_q) \; (O(a) > 0, \forall a \in \{a|P(a) > 0\}),\label{eq:SpectrumOfCHKCodeC}
\end{IEEEeqnarray}
\begin{equation}\label{eq:SpectrumOfCHKCodeD}
\frac{1}{m} \ln \alpha(\odot_{i=1}^m F_d^{\mathrm{CHK}})(P, Q) \le \delta_d(P(0),Q(0)) + \mathrm{O}\biggr(\frac{\ln m}{m}\biggr),
\end{equation}
where $\mathrm{coef}(f(u), (u^{v})_\otimes)$ denotes the coefficient of $(u^{v})_\otimes$ in the polynomial $f(u)$, and
\begin{IEEEeqnarray*}{Cl}
&g_1(u, Q) \\
\eqdef &\frac{{m \choose mQ}}{q^{m(d+1)}} \biggl[ ((u)_\oplus)^d + (q - 1) \biggl( \frac{qu(0) - (u)_\oplus}{q-1} \biggr)^d \biggr]^{mQ(0)} \\
&\biggl[ ((u)_\oplus)^d - \biggl( \frac{qu(0) - (u)_\oplus}{q-1} \biggr)^d \biggr]^{m(1-Q(0))},
\end{IEEEeqnarray*}
\begin{IEEEeqnarray*}{Cl}
&g_2(O, P, Q) \\
\eqdef &\frac{{m \choose mQ}}{q^{m(d+1)} (O^{mdP})_\otimes} \biggl[ 1 + (q - 1) \biggl( \frac{qO(0) - 1}{q-1} \biggr)^d \biggr]^{mQ(0)} \\
&\biggl[ 1 - \biggl( \frac{qO(0) - 1}{q-1} \biggr)^d \biggr]^{m(1-Q(0))},
\end{IEEEeqnarray*}
\begin{IEEEeqnarray}{rCl}
\delta_d(x, y) &\eqdef &\inf_{0 < \hat{x} < 1} \biggl\{ dD(x \| \hat{x}) + y \ln \biggl[ 1 + (q - 1) \biggl( \frac{q\hat{x} - 1}{q-1} \biggr)^d \biggr] \IEEEnonumber \\
& &+\: (1 - y) \ln \biggl[ 1 - \biggl( \frac{q\hat{x} - 1}{q-1} \biggr)^d \biggr] \biggr\}.\label{eq:DefinitionOfDeltad}
\end{IEEEeqnarray}
where $D(x \| \hat{x})$ is the \emph{information divergence} defined by
$$
D(x \| \hat{x}) \eqdef x \ln \frac{x}{\hat{x}} + (1 - x) \ln \frac{1 - x}{1 - \hat{x}}.
$$
\end{proposition}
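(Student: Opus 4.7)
The plan is to prove the four statements in sequence, each leveraging the generating function machinery from Proposition~\ref{pr:GeneratingFunctionProperty} and Corollary~\ref{co:GeneratingFunctionProperty} together with Proposition~\ref{pr:CheckSum}.

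For \eqref{eq:SpectrumOfCHKCodeA}, I would start from the definition of the generating function. Since $\mathrm{rl}(F_d^{\mathrm{CHK}})$ consists of the $q^d$ pairs $(x^d, F_d^{\mathrm{CHK}}(x^d))$, we have $\mathcal{G}(F_d^{\mathrm{CHK}})(u,v) = q^{-d}\sum_{x^d} \prod_i u(x_i) \cdot v(F_d^{\mathrm{CHK}}(x^d))$. Taking expectation, the inner expectation $E[v(\sum_i C_i x_i)]$ depends only on the number $k$ of nonzero entries of $x^d$, because for any $x_i \ne 0$ the product $C_i x_i$ is uniform on $\mathbb{F}_q \setminus \{0\}$. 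Proposition~\ref{pr:CheckSum} then gives the closed form $E[v(Y_k)] = q^{-1}[(v)_\oplus + r^k(qv(0)-(v)_\oplus)]$ with $r = -1/(q-1)$, valid for all $k \ge 0$. Grouping the sum over $x^d$ by $k$ and applying the binomial theorem to $\sum_{k=0}^d {d \choose k} u(0)^{d-k}((u)_\oplus - u(0))^k r^k$ collapses the $r^k$-weighted piece to $\bigl((qu(0)-(u)_\oplus)/(q-1)\bigr)^d$ while the unweighted piece gives $((u)_\oplus)^d$, producing \eqref{eq:SpectrumOfCHKCodeA}.

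For \eqref{eq:SpectrumOfCHKCodeB}, independence of the $m$ copies of $F_d^{\mathrm{CHK}}$ together with Corollary~\ref{co:GeneratingFunctionProperty} yields $E[\mathcal{G}(\odot_{i=1}^m F_d^{\mathrm{CHK}})(u,v)] = (E[\mathcal{G}(F_d^{\mathrm{CHK}})(u,v)])^m$. The key rearrangement is to rewrite the single-code generating function as $q^{-(d+1)}\{[A(u)+(q-1)B(u)]\,v(0) + [A(u)-B(u)]\sum_{a \ne 0} v(a)\}$ with $A(u) = ((u)_\oplus)^d$ and $B(u) = ((qu(0)-(u)_\oplus)/(q-1))^d$, which isolates the asymmetric role of $v(0)$. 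A multinomial expansion of the $m$-th power, grouped by the empirical type $Q$ of the chosen $v$-letters, then shows that the coefficient of $(v^{mQ})_\otimes$ is exactly $g_1(u,Q)$; extracting the coefficient of $(u^{mdP})_\otimes$ gives \eqref{eq:SpectrumOfCHKCodeB}.

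For \eqref{eq:SpectrumOfCHKCodeC}, I would invoke the elementary observation that a polynomial $f(u)$ with nonnegative coefficients satisfies $\mathrm{coef}(f,(u^w)_\otimes) \le f(u)/(u^w)_\otimes$ at any componentwise positive $u$. Since $E[S_{\mathbb{F}_q\mathbb{F}_q}(\odot_{i=1}^m F_d^{\mathrm{CHK}})(P,Q)]$ is a probability, the relevant coefficients of $g_1(u,Q)$ are nonnegative, so substituting $u = O$ (using $(O)_\oplus = 1$) and dividing by $(O^{mdP})_\otimes$ recovers exactly $g_2(O,P,Q)$, the positivity condition on $O$ ensuring that the denominator is well defined. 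Finally, \eqref{eq:SpectrumOfCHKCodeD} combines \eqref{eq:SpectrumOfCHKCodeC} with $\alpha(F)(P,Q) = E[S(P,Q)]/S(\mathbb{F}_q^{md} \times \mathbb{F}_q^m)(P,Q)$ from Proposition~\ref{pr:SpectrumPropertyOfFunctions} and the product formula of Proposition~\ref{pr:SpectrumOfSets}; the ${m \choose mQ}$ and $q^{m(d+1)}$ factors cancel, leaving a bound of $h(O,Q)/[{md \choose mdP}(O^{mdP})_\otimes]$ with $h(O,Q) = [1 + (q-1)r_O^d]^{mQ(0)} [1 - r_O^d]^{m(1-Q(0))}$ and $r_O = (qO(0)-1)/(q-1)$. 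The classical method-of-types lower bound ${md \choose mdP}(O^{mdP})_\otimes \ge (md+1)^{-q} e^{-md D(P\|O)}$ converts this into $\frac{1}{m}\ln\alpha \le dD(P\|O) + Q(0)\ln[1+(q-1)r_O^d] + (1-Q(0))\ln[1-r_O^d] + \mathrm{O}(\ln m/m)$. Choosing $O$ proportional to $P$ on $\mathbb{F}_q \setminus \{0\}$ with fixed $O(0) = \hat x$ minimizes $D(P\|O)$ and reduces it to the binary divergence $D(P(0)\|\hat x)$; infimizing over $\hat x \in (0,1)$ (which can be matched by a type in $\mathcal{P}_{md}(\mathbb{F}_q)$ up to the same $\mathrm{O}(\ln m/m)$ error) then yields $\delta_d(P(0),Q(0))$. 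The main technical obstacle is the bookkeeping in the multinomial expansion for \eqref{eq:SpectrumOfCHKCodeB}, where $v(0)$ must be treated asymmetrically from the $v(a)$ with $a\ne 0$; once that identity is in place, the remaining steps reduce to the standard method-of-types optimization.
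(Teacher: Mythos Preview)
Your proposal is correct and follows precisely the route the paper indicates (it omits the proof, citing Propositions~\ref{pr:SpectrumPropertyOfFunctions}, \ref{pr:CheckSum} and Corollary~\ref{co:GeneratingFunctionProperty} and deferring details to \cite{JSCC:Yang200800}). The four steps you outline---computing $E[\mathcal{G}(F_d^{\mathrm{CHK}})]$ from Proposition~\ref{pr:CheckSum} via the binomial identity, taking the $m$-th power by independence and Corollary~\ref{co:GeneratingFunctionProperty} and extracting the $(v^{mQ})_\otimes$ coefficient, the Chernoff-type coefficient bound $\mathrm{coef}(f,(u^w)_\otimes)\le f(O)/(O^w)_\otimes$ at a probability vector $O$, and the method-of-types estimate $\binom{md}{mdP}(O^{mdP})_\otimes\ge (md+1)^{-q}e^{-mdD(P\|O)}$ followed by the optimal choice of $O$ with $O(0)=\hat{x}$---are exactly the intended argument. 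One minor remark: the Chernoff bound in \eqref{eq:SpectrumOfCHKCodeC} in fact holds for any $O\in\mathcal{P}(\mathbb{F}_q)$ with the stated support condition, not only types, so in \eqref{eq:SpectrumOfCHKCodeD} you may take $O$ directly as the optimal distribution rather than approximating by a type; either way the $\mathrm{O}(\ln m/m)$ term absorbs the discrepancy.
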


Based on the above preparations, we now start to analyze the joint spectrum of the regular LDGM code $F_{n,c,d}^{\mathrm{LD}}$.

\begin{theorem}\label{th:SpectrumOfLDCode}
\begin{equation}\label{eq:SpectrumOfLDCodeA}
\frac{1}{n} \ln \alpha(F_{n, c, d}^{\mathrm{LD}})(P, Q) \le \frac{c}{d} \delta_d(P(0), Q(0)) + \mathrm{O}\biggl( \frac{\ln n}{n} \biggr).
\end{equation}
where $\delta_d$ is defined by \eqref{eq:DefinitionOfDeltad}. Let
\begin{equation}\label{eq:SpectrumOfLDCodeB}
A_n(\gamma) \eqdef \{P \in \mathcal{P}_n(\mathbb{F}_q)| 1 - P(0) > \gamma\},
\end{equation}
where $0 < \gamma < 1$. Then, when $q > 2$, for any $0 < \gamma < 1$ and any $\delta > 0$, there exits a positive integer $d_0 = d_0(\gamma, \delta)$ such that
\begin{equation}\label{eq:SpectrumOfLDCodeC}
\limsup_{n \to \infty} \max_{P \in A_n(\gamma), Q \in \mathcal{P}_{m}(\mathbb{F}_q)} \frac{1}{n} \ln \alpha(F_{n, c, d}^{\mathrm{LD}})(P, Q) \le \delta
\end{equation}
for all integers $d \ge d_0$.
\end{theorem}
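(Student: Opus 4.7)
The plan for \eqref{eq:SpectrumOfLDCodeA} is to exploit the serial decomposition $F_{n,c,d}^{\mathrm{LD}} = (\odot_{i=1}^m F_d^{\mathrm{CHK}}) \circ \Sigma_{cn} \circ (\odot_{i=1}^n f_c^{\mathrm{REP}})$ and reduce to the check-code stage. Applying Proposition~\ref{pr:SpectrumOfSeriallyConcatenatedFunctions} with the repetition stage as $F$ and the check stage as $G$, the quantity $E[S_{\mathbb{F}_q|\mathbb{F}_q}(F_{n,c,d}^{\mathrm{LD}})(Q|P)]$ expands as a sum over intermediate types $P' \in \mathcal{P}_{cn}(\mathbb{F}_q)$. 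Equation~\eqref{eq:SpectrumOfREPCodeD} shows the repetition code is a Kronecker delta in the type argument, so only the term $P' = P$ survives, yielding $\alpha(F_{n,c,d}^{\mathrm{LD}})(P,Q) = \alpha(\odot_{i=1}^m F_d^{\mathrm{CHK}})(P,Q)$. Combining this identity with \eqref{eq:SpectrumOfCHKCodeD}, multiplying by $m/n = c/d$, and noting that $\ln m = \ln n + \mathrm{O}(1)$ produces \eqref{eq:SpectrumOfLDCodeA}.

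For \eqref{eq:SpectrumOfLDCodeC}, since $c/d$ is a fixed positive constant (the inverse code rate), it is enough to prove that $\delta_d(x,y) \to 0$ uniformly on $[0, 1-\gamma] \times [0, 1]$. In the infimum \eqref{eq:DefinitionOfDeltad} I would use the sliding test point $\hat{x} = x + 1/d^2$, which lies in $(0,1)$ once $d > \sqrt{2/\gamma}$. A direct estimate using $\hat{x} \ge 1/d^2$ and $1 - \hat{x} \ge \gamma/2$ then yields $dD(x\|\hat{x}) = \mathrm{O}(1/d)$ uniformly. Setting $r = (q\hat{x}-1)/(q-1)$, the maximum of $|r|$ on $x \in [0, 1-\gamma]$ is bounded by $\max\{1/(q-1),\, |1 - q\gamma/(q-1)|\} + \mathrm{O}(1/d^2)$, which is strictly less than $1$ whenever $q > 2$ and $0 < \gamma < 1$. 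Hence $r^d$ decays geometrically and both logarithmic terms in the integrand of \eqref{eq:DefinitionOfDeltad} are $o(1)$ uniformly in $y$. Feeding the resulting uniform bound back into \eqref{eq:SpectrumOfLDCodeA} and taking $\limsup_{n\to\infty}$ (which kills the $\mathrm{O}(\ln n/n)$ remainder) delivers \eqref{eq:SpectrumOfLDCodeC} with an explicit $d_0(\gamma,\delta)$.

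The crux of the argument is securing this convergence \emph{uniformly} near $x = P(0) = 0$. The naive choice $\hat{x} = x$ is forbidden because the infimum runs over $(0,1)$, while any lower-bounded choice $\hat{x} \ge \eta > 0$ would make $dD(x\|\hat{x})$ grow linearly in $d$; the sliding point $\hat{x} = x + 1/d^2$ is precisely what lets these two competing contributions vanish simultaneously. The hypothesis $q > 2$ is the real driver of the result, as it alone guarantees $|r(0)| = 1/(q-1) < 1$. In the excluded case $q = 2$, one has $r(0) = -1$ and hence $r(0)^d = (-1)^d$; the log terms then oscillate boundedly rather than decay, and the whole approach breaks down.
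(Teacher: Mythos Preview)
Your argument for \eqref{eq:SpectrumOfLDCodeA} is exactly the paper's: serial decomposition via Proposition~\ref{pr:SpectrumOfSeriallyConcatenatedFunctions}, collapse of the repetition stage via \eqref{eq:SpectrumOfREPCodeD}, and invocation of \eqref{eq:SpectrumOfCHKCodeD} with $m/n=c/d$.

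For \eqref{eq:SpectrumOfLDCodeC} the two proofs differ only in the choice of test point in \eqref{eq:DefinitionOfDeltad}. The paper simply sets $\hat{x}=x$ (tacitly treating $x=0$ by continuity, since $D(0\|\hat{x})=-\ln(1-\hat{x})\to 0$ as $\hat{x}\to 0^+$), so the divergence term vanishes identically; it then merges the two logarithms by concavity into $\ln\bigl[1+(qy-1)r^d\bigr]\le (qy-1)r^d\le (q-1)|r|^d$ with $r=(qx-1)/(q-1)$, a one-line bound. Your shifted point $\hat{x}=x+1/d^2$ handles the boundary $x=0$ rigorously without a limit argument, at the price of carrying an extra $dD(x\|\hat{x})=\mathrm{O}(1/d)$ term (which your chi-square-type estimate correctly controls). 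Both routes then reduce to the same observation, namely that $q>2$ forces $\sup_{x\le 1-\gamma}|r|<1$ and hence geometric decay in $d$. The paper's version is shorter and gives the explicit exponential bound $\delta_d\le (q-1)|r|^d$; yours is a bit more fastidious about the open-interval constraint but otherwise equivalent.
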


\begin{proof}
At first, according to the definition of regular LDGM codes, we have
\begin{IEEEeqnarray*}{Cl}
&E[S_{\mathbb{F}_q|\mathbb{F}_q}(F_{n, c, d}^{\mathrm{LD}})(Q|P)] \\
\eqvar{(a)} &\sum_{O \in \mathcal{P}_{nc}(\mathbb{F}_q)} E[S_{\mathbb{F}_q|\mathbb{F}_q}(\odot_{i=1}^n f_c^{\mathrm{REP}})(O|P)] \:\cdot\\
&E[S_{\mathbb{F}_q|\mathbb{F}_q}(\odot_{i=1}^m F_d^{\mathrm{CHK}})(Q|O)] \\
\eqvar{(b)} &\sum_{O \in \mathcal{P}_{nc}(\mathbb{F}_q)} 1\{P = O\} E[S_{\mathbb{F}_q|\mathbb{F}_q}(\odot_{i=1}^m F_d^{\mathrm{CHK}})(Q|O)] \\
= &E[S_{\mathbb{F}_q|\mathbb{F}_q}(\odot_{i=1}^m F_d^{\mathrm{CHK}})(Q|P)] \\
\levar{(c)} &e^{m(\delta_d(P(0), Q(0)) + \mathrm{O}(\frac{\ln m}{m}))} S_{\mathbb{F}_q}(\mathbb{F}_q^m)(Q),
\end{IEEEeqnarray*}
where (a) follows from  Proposition \ref{pr:SpectrumOfSeriallyConcatenatedFunctions} and the definition of $F_{n,c,d}^{\mathrm{LD}}$, (b) from Proposition \ref{pr:SpectrumOfREPCode}, and (c) from Proposition \ref{pr:SpectrumOfCHKCode}. This then concludes \eqref{eq:SpectrumOfLDCodeA}.

By the definition of $\delta_d$, we have
\begin{IEEEeqnarray*}{rCl}
\delta_d(x, y)
&\le &dD(x \| x) + y \ln \biggl[ 1 + (q - 1) \biggl( \frac{qx - 1}{q-1} \biggr)^d \biggr] \\
& &+\: (1 - y) \ln \biggl[ 1 - \biggl( \frac{qx - 1}{q-1} \biggr)^d \biggr] \\
&\le &\ln \biggl[ 1 + (qy - 1) \biggl( \frac{qx - 1}{q-1} \biggr)^d \biggr] \\
&\le &(qy - 1) \biggl( \frac{qx - 1}{q-1} \biggr)^d \\
&\le &(q - 1) \left| \frac{qx - 1}{q-1} \right|^d.
\end{IEEEeqnarray*}
Furthermore, when $0 \le x < 1 - \gamma$ and $q > 2$, we have
$$
-1 < - \frac{1}{q-1} \le \frac{qx - 1}{q - 1} \le 1 - \frac{q\gamma}{q - 1} < 1
$$
or
$$
\left| \frac{qx - 1}{q - 1} \right| < \max\{\frac{1}{q-1}, 1 - \frac{q\gamma}{q - 1}\} < 1.
$$
Then there exists a positive integer $d_0 = d_0(\gamma, \delta)$ such that
$$
\sup_{\scriptstyle 0 \le x < 1 - \gamma, \atop \scriptstyle 0 \le y \le 1} \delta_d(x, y) \le \frac{d \delta}{c} \quad \forall d \ge d_0.
$$
Note here that the ratio $d/c$ is the rate of the code and hence should be a constant or at least bounded.

Therefore, for any $d \ge d_0$, we have
\begin{IEEEeqnarray*}{Cl}
&\max_{P \in A_n(\gamma), Q \in \mathcal{P}_{m}(\mathbb{F}_q)} \frac{1}{n} \ln \alpha(F_{n, c, d}^{\mathrm{LD}})(P, Q) \\
\levar{(a)} &\frac{c}{d} \sup_{\scriptstyle 0 \le x < 1 - \gamma, \atop \scriptstyle 0 \le y \le 1} \delta_d(x, y) + \mathrm{O}\biggl( \frac{\ln n}{n} \biggr) \\
\le &\delta  + \mathrm{O}\biggl( \frac{\ln n}{n} \biggr),
\end{IEEEeqnarray*}
where (a) follows from \eqref{eq:SpectrumOfLDCodeA}. This concludes \eqref{eq:SpectrumOfLDCodeC}.
\end{proof}

Theorem \ref{th:SpectrumOfLDCode} actually exhibits a family of codes whose joint spectra are approximately $\delta$-asymptotically good. Then together with the conclusion at the end of Section \ref{sec:IV}, we have completed the construction of linear codes with good joint spectra, i.e., a serial concatenation scheme with one LDPC code as an outer code and one LDGM code as an inner code. An analogous construction has been proposed by Hsu in his thesis \cite{JSCC:Hsu200600}, but his purpose was only to find good channel codes and only a rate-1 LDGM code was employed as an inner code in his construction.

\bibliographystyle{IEEEtran} 
\bibliography{IEEEabrv,JSCC4}

\end{document}